\newtheorem{theorem}{Theorem}
\newtheorem{proposition}{Proposition}
\newtheorem{lemma}{Lemma}
\newtheorem{definition}{Definition}
\newtheorem{corollary}{Corollary}
\newcommand{\FAT}[1]{\mbox{{$\mathbb{#1}$}}}
\newcommand{\RR}{\FAT{R}}
\renewcommand{\int}[1]{\stackrel{\circ}{#1}}
\def\thm#1{Theorem~\ref{thm:#1}}
\def\ignore#1{}
\begin{document}

\newcommand{\real}{I \hspace{-3pt} R}

\begin{titlepage}

\thispagestyle{empty}

\title{Graph Coalition Structure Generation}
\author{Thomas D. Voice \quad Maria Polukarov \quad Nicholas R. Jennings \\ \small{University of Southampton, UK}\\ \small{\tt\{tdv,mp3,nrj\}@ecs.soton.ac.uk}}
\date{}
\maketitle


\begin{abstract}
\noindent
We give the first analysis of the computational complexity of {\it coalition structure generation over graphs}. Given an undirected graph $G=(N,E)$ and a valuation function $v:2^N\rightarrow\RR$ over the subsets of nodes, the problem is to find a partition of $N$ into connected subsets, that maximises the sum of the components' values. This problem is generally NP--complete; in particular, it is hard for a defined class of valuation functions which are {\it independent of disconnected members}---that is, two nodes have no effect on each other's marginal contribution to their vertex separator. Nonetheless, for all such functions we provide bounds on the complexity of coalition structure generation over general and minor free graphs. Our proof is constructive and yields algorithms for solving corresponding instances of the problem. Furthermore, we derive polynomial time bounds for acyclic, $K_{2,3}$ and $K_4$ minor free graphs. However, as we show, the problem remains NP--complete for planar graphs, and hence, for any $K_k$ minor free graphs where $k\geq 5$. Moreover, our hardness result holds for a particular subclass of valuation functions, termed {\it edge sum}, where the value of each subset of nodes is simply determined by the sum of given weights of the edges in the induced subgraph.
\end{abstract}

\end{titlepage}


\section{Introduction}\label{sec:intro}
\noindent
{\it Coalition structure generation} (CSG) is the equivalent of the complete set partitioning problem~\cite{COMPLETESETPARTITIONING}---one of the fundamental problems in combinatorial optimisation, that has applications in many fields, from political sciences and economics, to operations research and computer science. In particular, it has recently become a major research topic in artificial intelligence and multi-agent systems, as a tool for agents to form effective teams. In a CSG problem, we have a set $N$ of $n$ elements and a valuation function $v:2^N\rightarrow\RR$ over its subsets, and the problem is to divide the given set into disjoint exhaustive subsets (or, {\it coalitions}) $N_1,\ldots,N_m$ so that 
the total sum of values, $\sum_{i=1}^m v(N_i)$, is maximised. Thus, we seek a most valuable partition (or, a {\it coalition structure}) over $N$. 
 

There have been several algorithms developed for CSG. In~\cite{CSGWORSTCASE}, an anytime procedure with worst case guarantees is proposed; however, it reaches an optimal solution after checking all possible coalition structures, and so runs in time $O(n^n)$. On the other hand, algorithms based on dynamic programming (DP)~\cite{COMPLETESETPARTITIONING,Rothkopf1998,Rahwan2008} are guaranteed to obtain an optimal solution in $O(3^n)$. However,   the integer partition (IP) algorithm given in~\cite{Rahwan2007}, although it has the worst case complexity of $O(n^n)$, in practice, is much faster than the DP based algorithms. This algorithm is anytime, and it works by dividing the search space into regions, according to the coalition structure configurations based on the sizes of coalitions they contain, and then performing branch-and-bound search. Furthermore, the improved version of the IP algorithm~\cite{Rahwan2008} uses DP for preprocessing. Alternatively, in~\cite{COMPACTREPRESENTATION}, the authors suggest to utilise compact representation schemes for valuation functions. Indeed, in practice, these functions often display significant structure, and there have been several methods developed to represent them concisely (e.g., by a set of rules)~\cite{Ieong2005,ConSan2004,ConSan2006}. Given this, the problem can be formulated as a mixed integer program (MIP) and solved reasonably well as compared to the IP algorithm that does not make use of compact representations~\cite{COMPACTREPRESENTATION}. Finally, the CSG problem has been also tackled with heuristics methods. In particular,~\cite{Kraus1998} devised a greedy procedure that puts constraints on the possible size of the coalitions formed. This technique, though, does not guarantee that the optimal value will be reached at any point, nor does it give the means of evaluating the quality of the coalition structure selected.

However, all these works assume no structure on the primitive set of elements. This is a considerable shortcoming, as in  various contexts of interest to computer scientists, these elements represent agents (either human or automated) or resources (e.g., machines, computers, service providers or communication lines), which are typically embedded in a social or computer network. Moreover, in many such scenarios those elements which are disconnected, have no effect on each other's performance and potential contribution to a coalition, or may not be able to cooperate at all, due to the lack of communication, coordination or for other reasons. For example, consider a communication network where each edge is a channel, with capacity indicating the amount of information that can be transmitted through it. Any subset of nodes in this network produces a value proportional to the total capacity of the subnetwork induced by these nodes. In such a scenario, any two nodes that are not connected by a direct link in the network, will not affect each other's marginal contribution to any coalition of nodes that separates them. Or, assume that an edge represents a trust link in a reputation system, so that two nodes will only participate in the same coalition if the trust distance given by the length of a path between them, is finite, and suppose that a value of a coalition is given by the number of its mutually trusted members. Then, a contribution of a particular node $i$ will not depend on another node $j$ who trusts some members of the coalition but does not trust $i$ directly. 

Against this background, in this paper we extend the CSG problem to connected sets. More precisely, we consider coalition structures over the node set of a graph, endowed with a valuation function that has the independence of disconnected members. This is formally defined below. 


\subsection{Coalition structure generation over graphs}\label{subsec:GCSG}
\noindent
Given the setting with a finite set of elements $N$ in a connected undirected graph $G=(N,E)$ and a coalition valuation function $v:2^N\rightarrow\RR$ over subsets of $N$, where $v(\emptyset)=0$, we consider a class of coalition structure generation problems over $N$. Accordingly, we make the following definitions.
\begin{definition}\label{def:connectedCS}
For a graph $G = (N, E)$, a coalition structure $\mathcal{C}$ over $N$ is \emph{connected} if the induced subgraph of $G$ over $C$ is connected for all $C \in \mathcal{C}$.
\end{definition}
\begin{definition}\label{def:IDM}
For a graph $G = (N, E)$, a function $v:2^N\rightarrow\RR$ is \emph{independent of disconnected members (IDM)} if for all $i, j \in N$ with vertex separator $C$,
{\small\[
v(C \cup \{i\}) - v(C) = v(C \cup \{i, j\}) - v(C \cup \{j\}).
\]}
\end{definition}
To give an example, suppose that each edge $(i,j)
 \in E$ is associated with a constant weight $v_{i,j}\in\RR$. Then, the coalition valuation function
{\small\[
v(C) = \sum_{(i,j)
 \in E : i,j \in C } v_{i, j}
\]}
has the independence of disconnected members property. We shall term such a function an \emph{edge sum} coalition valuation function. This function is important as it naturally arises in many application scenarios (e.g., communication networks and information systems) and has simple representation.

Note, under Definition~\ref{def:IDM}, if $v(\cdot)$ is IDM and we have two coalitions $B$ and $C$ which are disconnected, then $v(B \cup C) = v(B) + v(C)$. So, for any coalition $C$, its value $v(C)$ is equal to the sum of $v(\cdot)$ over all its connected components. We can deduce that, for any coalition structure $\mathcal{C}$ there exists a coalition structure $\mathcal{D}$ such that $v(\mathcal{C}) = v(\mathcal{D})$ and all coalitions in $\mathcal{D}$ are connected subgraphs. Thus, without loss of generality, we can restrict our attention to coalition structures consisting only of connected subgraphs, which we will call \emph{connected coalition structures}. Moreover, if $G$ is not a connected graph, then we can solve any coalition structure problem over $G$ with an IDM coalition valuation function by finding the optimal coalition structure over each connected component of $G$ and combining the results. The operation of testing connectivity and finding connected components is computationally tractable in polynomial time, and so, without loss of generality, we restrict our attention to connected graphs $G$. Given this, we define a \emph{graph coalition structure generation} (GCSG) problem as follows.
%
%
\begin{definition}\label{def:GCSG}
Given a connected undirected graph $G=(N,E)$ and a coalition valuation function $v:2^N\rightarrow\RR$ which is independent of disconnected members, the \emph{graph coalition structure generation} problem over $G$ is to maximise  $\sum_{C \in \mathcal{C}} v(C)$ for $\mathcal{C}$ a coalition structure over $N$. This problem is equivalent to maximising the same objective function over all \emph{connected} coalition structures.
\end{definition}


\subsection{Our main results}\label{subsec:results}
\noindent
Here, the main results of this paper are summarised. We start by observing that the GCSG problem is NP--complete on general graphs, even for edge sum valuation functions (Section~\ref{sec:general}). Alongside the hardness result, we show that a general instance with $|N|=n$ nodes and $|E|=e$ edges can be solved in time $O\left(n^2 {{e+n}\choose{n}}\right)$ (see \thm{boundGeneral}). For sparse graphs with $e=cn$ edges, where $c$ is a constant, this implies the bound of  $O\left(n^2 y^n\right)$ with a constant $y=\frac{(c+1)^{c+1}}{c^c}$ (see, e.g.,~\cite{CHOOSEBOUND}). 

Given this, we further study special graph classes, namely planar graphs and, more generally, minor free graphs. We give general bounds on the computational complexity of the GCSG problem for these graphs (Section~\ref{sec:minorFree}). Furthermore, we show polynomial time solvability of the GCSG problem for acyclic, $K_{2,3}$ and $K_4$ minor graphs (see~\ref{subsec:smallMinorFree}), and its NP--hardness for planar, and hence, all $K_k$ minor free graphs for $k\geq 5$ (the full proof is given in the appendix). 

To this end, we consider a class of graphs which are guaranteed to contain vertex separators, as defined below.
\begin{definition}\label{def:separator} 
A class of graphs $S$ \emph{satisfies an $f(n)$-separator theorem with constant $\alpha < 1$} if for all $G = (N,E) \in S$ with $|N| = n$ there exist two subgraphs $A, B \subseteq G$ such that $A \cup B = G$, the number of nodes in $A \cap B$ is less than or equal to $f(n)$ and both the number of nodes in $A \setminus B$ and the number of nodes in $B \setminus A$ are less than or equal to $\alpha n$.
\end{definition}
The next theorem is our main technical result.
\begin{theorem}\label{thm:separator}
Suppose a class of graphs $S$ is closed under taking subgraphs and there is an increasing function
$g(n)$ such that for all $G = (N, E) \in S$ with $|N| = n$, graph $G$ has at most $g(n)$ possible connected coalition structures. Suppose further that $S$ satisfies an $f(n)$-separator theorem with constant $\alpha < 1$, and that for any $G \in S$ such a separator can be found in $o(\exp(h(\alpha, n)))$ time, where  $f(n)$ is an increasing $o(n)$ function and
{\small \[
h(\alpha, n) = \sum_{i=0}^{\lfloor \log(n) / |\log(\alpha)| \rfloor} 2 \log(g(f(\alpha^i n)))
+  2 f(\alpha^i n) 
\log\Bigl( \sum_{j=0}^{i-1} f(\alpha^j n) \Bigr).
\]}
Then, for any $1 > \beta > \alpha$, an instance of the graph coalition structure generation problem over a graph from $S$ can be solved in $O(\exp(h(\beta, n)))$ computation steps.
\end{theorem}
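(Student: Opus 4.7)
The plan is to prove the theorem by divide-and-conquer driven by the separator theorem, augmented with dynamic programming indexed by partitions of an accumulated boundary; the IDM property is what makes the combination step well defined. I would parametrise each recursive subproblem by a pair $(H, B)$, where $H \subseteq G$ is the subgraph reached at the current node of the recursion tree and $B \subseteq V(H)$ is the union of all ancestor separators that still lie inside $H$. For every partition $\pi$ of $B$---a guess as to which boundary vertices must share a coalition because of neighbouring pieces already solved---I would store $\mathrm{opt}(H, B, \pi)$, the maximum sum of values of a connected coalition structure of $V(H)$ consistent with $\pi$, corrected by subtracting the $v$ of $\pi$-blocks so that the telescoping sum over the recursion returns $\sum_{C \in \mathcal{C}} v(C)$ for the final coalition structure $\mathcal{C}$. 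This is where IDM is essential: for a connected coalition $K$ split by the current separator $C$ into $K_1$ on one side, $K_2$ on the other, and $C_K = K \cap C$, iterated application of the IDM identity yields $v(K) = v(C_K \cup K_1) + v(C_K \cup K_2) - v(C_K)$, so the contributions from the two children combine additively modulo an easily tracked correction.

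For the recursive step, I would invoke the $f$-separator theorem on $H$ to produce $C$ of size at most $f(|V(H)|) \leq f(\alpha^i n)$ splitting $H$ into $H_1, H_2$ with $|V(H_k) \setminus C| \leq \alpha |V(H)|$, and then enumerate every refinement $\pi'$ of $\pi$ to the enlarged boundary $B \cup C$. A refinement is specified by assigning each vertex of $C$ either to an existing block of $\pi$ (at most $|B| \leq \sum_{j < i} f(\alpha^j n)$ options per vertex) or to a block of some connected coalition structure on the induced subgraph $G[C]$ (at most $g(|C|) \leq g(f(\alpha^i n))$ options, using that $S$ is closed under subgraphs). Recursing on $(H_1, B \cup C)$ and $(H_2, B \cup C)$ with the induced refined partitions and taking the maximum over $\pi'$ then computes $\mathrm{opt}(H, B, \pi)$; at the root, with $B = \emptyset$, this is exactly the desired optimum.

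The total work at level $i$ is bounded by $g(f(\alpha^i n)) \cdot \bigl(\sum_{j<i} f(\alpha^j n)\bigr)^{f(\alpha^i n)}$ refinements, with the factor of two in the statement absorbing the multiplicative cost of combining both child tables; taking logarithms and summing across the at most $\lfloor \log n / |\log \alpha| \rfloor$ levels reproduces $h(\alpha, n)$ term for term. Adding the $o(\exp(h(\alpha, n)))$ separator-finding cost and a polynomial overhead per DP entry, and absorbing all lower-order terms into the exponent by moving from $\alpha$ to any $\beta > \alpha$, yields the claimed $O(\exp(h(\beta, n)))$ bound. The main obstacle will be the careful IDM bookkeeping across nested separators: the decomposition identity above must be iterated to show that the DP values telescope correctly through arbitrarily deep merges, and I would need to verify that every connected coalition structure of $G$ is represented by exactly one sequence of refinements down the recursion tree, with no spurious reconnections through a subproblem that has already been closed.
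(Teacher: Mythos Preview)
Your approach is essentially the paper's: recursive separator decomposition with dynamic programming on boundary partitions, using IDM to decompose coalition values across a separator. The paper packages the IDM bookkeeping into a preliminary lemma that defines a modified valuation $v^{\mathcal{D}}(\cdot)$ on one side once a structure on the other side is fixed; your identity $v(K) = v(C_K \cup K_1) + v(C_K \cup K_2) - v(C_K)$ is the special case of that lemma for a single coalition.

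There is one point where your merge step, as written, would fail. You enumerate a \emph{single} refinement $\pi'$ of $\pi$ on $B \cup C$ and pass its restrictions to both children. But the two children in general induce \emph{different} partitions on the separator $C$: two separator vertices may lie in the same global coalition yet be connected only through $H_2$, so $H_1$'s internal structure places them in distinct blocks while $H_2$'s does not. A single $\pi'$ cannot simultaneously encode both views, and with only one you will either miss the optimum or allow spurious reconnections (exactly the worry you flag at the end). The paper resolves this by enumerating a \emph{pair} of boundary partitions, one over $A \cap (D \cup B)$ and one over $B \cap (D \cup A)$, and checking that their combination induces the parent constraint $\mathcal{D}$. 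This pairwise enumeration, bounded by $g(f(m))^2 \max(r,1)^{2f(m)}$, is the true source of the factor $2$ in each summand of $h(\alpha,n)$; the ``two child tables'' contribute only a multiplicative $2$ per level, i.e.\ $2^{O(\log n)}$ overall, which is polynomial and absorbed into the passage from $\alpha$ to $\beta$. With that correction the counting matches the statement and the rest of your outline goes through.
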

The proof and corollaries of Theorem~\ref{thm:separator} are presented in Section~\ref{sec:separator}. Coupled with the separator theorems shown for planar and minor free graphs in~\cite{MINORSEPARATOR,PLANARGRAPHSEPARATOR}, this gives the base to the following results.
\begin{theorem}\label{thm:minorFree}
For any graph $H$ with $k$ vertices and $\gamma > k \sqrt{k} / (1 - \sqrt{2/3})$, an instance of the graph coalition structure generation problem over an $H$ minor free graph $G$ with $n$ nodes requires $O(n^{\gamma \sqrt{n}} )$ computation steps.
\end{theorem}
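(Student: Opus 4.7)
The plan is to apply \thm{separator} using the separator theorem for $H$ minor free graphs established in~\cite{MINORSEPARATOR}. First I verify the hypotheses. The class of $H$ minor free graphs is closed under taking subgraphs, since removing vertices or edges cannot introduce a new minor. By~\cite{MINORSEPARATOR}, every $H$ minor free graph on $n$ nodes admits a vertex separator of size at most $f(n)=ck\sqrt{kn}$ for a fixed absolute constant $c$, yielding a balanced split with $\alpha=2/3$, and such a separator can be found in polynomial time (which is easily $o(\exp(h))$). For $g(n)$, note that $H$ minor free graphs have $O(n)$ edges for fixed $k$, so \thm{boundGeneral} gives at most $g(n)=\exp(O(n))$ connected coalition structures, and hence $\log g(n)=O(n)$.

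Next I would estimate $h(\beta,n)$ for a parameter $\beta>2/3$ to be fixed at the end. With $\log g(f(\beta^i n)) = O(f(\beta^i n))$, the first summand at recursion level $i$ contributes $O(k\sqrt{k}\sqrt{\beta^i n})$. The inner sum $\sum_{j=0}^{i-1} f(\beta^j n)$ in the second summand is a geometric series in $\sqrt{\beta}$, uniformly bounded (in $i$) by $k\sqrt{kn}/(1-\sqrt{\beta})$, whose logarithm is $O(\log n)$ when $k$ and $\beta$ are treated as constants. So the second summand at level $i$ is $O(k\sqrt{k}\sqrt{\beta^i n}\,\log n)$ and dominates the first. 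Summing another geometric series in $\sqrt{\beta}$ over $i$ then yields
{\small\[
h(\beta,n) \;=\; \Bigl(\frac{k\sqrt{k}}{1-\sqrt{\beta}}+o(1)\Bigr)\sqrt{n}\,\log n.
\]}

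To conclude, given any $\gamma>k\sqrt{k}/(1-\sqrt{2/3})$, by continuity of $\beta\mapsto k\sqrt{k}/(1-\sqrt{\beta})$ I can choose $\beta>2/3$ close enough to $2/3$ that $k\sqrt{k}/(1-\sqrt{\beta})<\gamma$. Then \thm{separator} yields the complexity bound $\exp(h(\beta,n))=O(n^{\gamma\sqrt{n}})$, as required. I expect the main obstacle to be the careful bookkeeping of the two nested geometric sums in $h$, specifically showing that the logarithmic factor $\log(\sum_{j<i} f(\beta^j n))$ is uniformly $O(\log n)$ rather than growing with the level $i$, and honestly absorbing the lower-order correction terms into the $o(1)$; this is precisely what forces the theorem to be stated with the strict inequality $\gamma>k\sqrt{k}/(1-\sqrt{2/3})$ instead of equality.
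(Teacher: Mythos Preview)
Your proposal is correct and follows the paper's route: invoke the $k\sqrt{kn}$-separator theorem with $\alpha=2/3$ for $H$-minor-free graphs, use the linear edge bound to get $\log g(n)=O(n)$, and estimate $h(\beta,n)$ by summing the two geometric series in $\sqrt{\beta}$; the paper simply packages this last estimate into a general \prop{minor} (with parameters $L=k\sqrt{k}$, $c=\tfrac12$, $b=1$) and reads off the threshold $2Lc/(1-\alpha^c)=k\sqrt{k}/(1-\sqrt{2/3})$. One point to tighten in your write-up: the asserted leading coefficient $k\sqrt{k}/(1-\sqrt{\beta})$ only follows if you take the separator size as exactly $k\sqrt{kn}$ (drop your extra absolute constant $c$, as the paper's citation does) and track explicitly that the factor $2$ in the definition of $h$ cancels against $\log\bigl(\sum_{j<i}f(\beta^j n)\bigr)\sim\tfrac12\log n$, rather than hiding both inside big-$O$.
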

\begin{theorem}\label{thm:Planar}
For any $\gamma > 2 \sqrt{2} /(1 - \sqrt{2/3})$, a general instance of a graph coalition structure generation problem over a planar graph $G$ with $n$ nodes can be solved in $O(n^{\gamma \sqrt{n}} )$ computation steps.
\end{theorem}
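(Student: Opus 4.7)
The plan is to invoke Theorem~\ref{thm:separator} with $S$ the class of planar graphs, instantiated by the classical Lipton--Tarjan planar separator theorem. This gives $f(n) = 2\sqrt{2n}$ and $\alpha = 2/3$, and such a separator can be found in linear time, which is certainly $o(\exp(h(\alpha,n)))$. Planar graphs are closed under taking subgraphs, so the structural hypothesis of Theorem~\ref{thm:separator} is satisfied. This mirrors the derivation of Theorem~\ref{thm:minorFree} (since by Kuratowski's theorem planar graphs are $K_5$ minor free) but uses the tighter planar constant $2\sqrt{2}$ instead of a generic $k\sqrt{k}$.

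For the bound $g(n)$ on the number of connected coalition structures, I would use the trivial overestimate by the total number of set partitions, giving $g(n)\le n^n$ and $\log g(n)=O(n\log n)$. Substituted into $h$, the first summand contributes $2\log g(f(\beta^i n)) = O(\sqrt{\beta^i n}\log n)$ per layer and $O(\sqrt{n}\log n)$ in total, matching the order of the dominant term but with an absolute constant that can be absorbed by slack in the final choice of $\beta$.

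The main calculation is to estimate
$$\sum_{i=1}^{\lfloor \log n / |\log\beta|\rfloor} 2 f(\beta^i n)\,\log\Bigl(\sum_{j=0}^{i-1} f(\beta^j n)\Bigr).$$
The inner sum is bounded above by the geometric series $\sum_{j\ge 0} 2\sqrt{2\beta^j n} = 2\sqrt{2n}/(1-\sqrt{\beta})$, so its logarithm is $\tfrac{1}{2}\log n + O(1)$. Pulling this factor out and using $\sum_{i\ge 0}\sqrt{\beta^i n} = \sqrt{n}/(1-\sqrt{\beta})$, the total is bounded by
$$h(\beta,n) \;\le\; \frac{2\sqrt{2}}{1-\sqrt{\beta}}\,\sqrt{n}\,\log n \;+\; o(\sqrt{n}\log n),$$
so Theorem~\ref{thm:separator} yields a running time of $O(\exp(h(\beta,n))) = O\bigl(n^{(2\sqrt{2}/(1-\sqrt{\beta}))\sqrt{n}}\bigr)$.

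Finally, given $\gamma > 2\sqrt{2}/(1-\sqrt{2/3})$, by continuity of $\beta\mapsto 2\sqrt{2}/(1-\sqrt{\beta})$ I can pick $\beta\in(2/3,1)$ close enough to $2/3$ that $2\sqrt{2}/(1-\sqrt{\beta}) < \gamma$, yielding the claimed bound $O(n^{\gamma\sqrt{n}})$. The main subtlety I expect is bookkeeping the lower-order contributions -- the $\log g$ term, the $i=0$ edge case where the inner sum is empty and $\log$ must be interpreted by convention, and the $O(1)$ constants hidden inside the logarithm -- and verifying that each of them is $o(\sqrt{n}\log n)$ so that it is indeed absorbed by the gap between $2\sqrt{2}/(1-\sqrt{\beta})$ and $\gamma$.
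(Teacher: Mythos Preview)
Your overall structure is right, but the choice $g(n)\le n^n$ is too loose to reach the constant $2\sqrt{2}/(1-\sqrt{2/3})$, and the claim that the $\log g$ contribution ``can be absorbed by slack in $\beta$'' is where the argument breaks. With $g(n)=n^n$ and $f(n)=2\sqrt{2n}$ one has
\[
2\log g(f(\beta^i n)) \;=\; 2\,f(\beta^i n)\,\log f(\beta^i n)
\;=\; 2\sqrt{2\beta^i n}\,\log(8\beta^i n),
\]
so summing over $i$ gives, to leading order, exactly $\dfrac{2\sqrt{2}}{1-\sqrt{\beta}}\sqrt{n}\log n$ --- the \emph{same} leading term as your second sum. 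Hence your bound on $h(\beta,n)$ is really $\dfrac{4\sqrt{2}}{1-\sqrt{\beta}}\sqrt{n}\log n + o(\sqrt{n}\log n)$, and by continuity you can only conclude the theorem for $\gamma>4\sqrt{2}/(1-\sqrt{2/3})$. The ``slack'' argument fails because the constant in the first sum carries the same $1/(1-\sqrt{\beta})$ dependence as the second; moving $\beta$ closer to $2/3$ shrinks both equally and never closes the factor of~$2$.

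What is actually needed is $\log g(n)=O(n)$ rather than $O(n\log n)$, so that $\sum_i 2\log g(f(\beta^i n))=O(\sqrt{n})=o(\sqrt{n}\log n)$ and genuinely drops out. The paper secures this via Lemma~\ref{lem:inducedcoalitionstructures}: the coalition structures that the inductive step of Theorem~\ref{thm:separator} enumerates on a planar separator are \emph{non-crossing} partitions of the separator (with respect to the cyclic boundary order), and there are at most $4^{|C|}/2$ of those, so effectively $g(n)=4^n/2$. An alternative fix that stays closer to your outline is to use the planar edge bound $|E|\le 3n-6$, which already gives $g(n)\le 2^{3n}$ and hence $\log g(n)=O(n)$; this is the mechanism behind Corollary~\ref{cor:sep2} and the proof of Theorem~\ref{thm:minorFree}. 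Either choice makes the first sum lower order and recovers the stated constant.
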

We further obtain polynomial time bounds for small minor free graphs.
\begin{theorem}\label{thm:small}
A GCSG problem can be solved in $O(n^2)$ computational steps for trees with $n$ vertices, and in $O(n^3)$ steps for $K_{2,3}$ or $K_4$ minor free graphs.
\end{theorem}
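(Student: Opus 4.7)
The plan is to give direct dynamic programming algorithms tailored to each graph class, both grounded in the following cut-vertex decomposition obtained by iterating IDM: if $C$ is a connected coalition and $u$ is a cut vertex of $C$ splitting it into connected pieces $C = A \cup B$ with $A \cap B = \{u\}$, then $v(C) = v(A) + v(B) - v(\{u\})$. This follows because any node added to $A\setminus\{u\}$ has a marginal contribution that, by IDM with separator containing $u$, is independent of $B\setminus\{u\}$. Iterating this identity around a single vertex $v$, if $v$'s coalition $C$ restricts to connected pieces $S_1,\ldots,S_k$ on the subtrees rooted at $v$'s children in $C$, then $v(C) = v(\{v\}) + \sum_i [v(\{v\}\cup S_i) - v(\{v\})]$, so the decision to merge $v$'s coalition with each child decouples.

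For trees, root at an arbitrary vertex and let $\text{OPT}(v)$ denote the optimum coalition structure value on the subtree $T_v$. Setting $\mu(v,w) = v(\{v,w\}) - v(\{v\}) - v(\{w\})$, a further application of the decomposition (at each child as a cut vertex) shows that merging $v$'s coalition across the edge to child $w$ contributes exactly $\mu(v,w)$ on top of $\text{OPT}(w)$, independently of the internal structure of $T_w$, yielding
\[
\text{OPT}(v) \;=\; v(\{v\}) + \sum_{w \text{ child of } v} \bigl[\text{OPT}(w) + \max(0,\mu(v,w))\bigr].
\]
A single postorder traversal computes every $\text{OPT}(v)$ in $O(n)$ arithmetic operations and $O(n)$ oracle queries, each on a subset of size at most $2$; charging the oracle evaluations gives the stated $O(n^2)$ bound.

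For $K_4$- and $K_{2,3}$-minor-free graphs, we exploit the restricted block structure of these classes: $K_4$-minor-free graphs have treewidth at most $2$ (series-parallel), and the 2-connected components of a $K_{2,3}$-minor-free graph are highly restricted, consisting of cycles and small dense pieces. We DP over the block--cut tree rooted at an arbitrary cut vertex. For each block $B$ with entry cut vertex $c$, we precompute, for every connected subset $S\subseteq B$ containing $c$ together with a compatible partition of the remaining cut vertices of $B$ indicating which lie in $c$'s coalition, the best coalition-structure value on $B$ consistent with this configuration; the cut-vertex decomposition is again used at each cut vertex of $B$ to correctly combine contributions from other blocks attached there. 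Because each block is either of constant size or a cycle, the number of configurations per block is polynomial in the block size (for a cycle of length $k$, connected subsets through $c$ are arcs, giving $O(k^2)$ configurations overall). Combining the block-level tables via DP over the block--cut tree and accounting for the cost of oracle queries on subsets that may span an entire block yields the $O(n^3)$ bound.

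The main obstacle is the careful bookkeeping in cycle blocks with several cut vertices: one must enumerate the arc decomposition induced by the coalitions on the cycle, determine which cut vertices end up in the same coalition, and correctly reconcile the per-coalition values with the contributions of other blocks using the cut-vertex decomposition at each shared cut vertex. Getting these interactions right, rather than any tree-level DP, is what limits the runtime to $O(n^3)$.
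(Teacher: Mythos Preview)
Your tree argument is correct and is essentially the paper's leaf-stripping proof: peel a leaf $i$ with neighbour $j$, decide by the sign of the marginal gain whether to merge, and recurse.

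The $K_4$- and $K_{2,3}$-minor-free part has a genuine gap. Your block-level DP rests on the claim that ``each block is either of constant size or a cycle'', so that the configurations you enumerate---connected subsets $S\subseteq B$ through the entry cut vertex $c$---are polynomially many. This structural claim is false for both classes. A $2$-connected $K_4$-minor-free graph is an arbitrary $2$-connected series--parallel graph, not merely a cycle: the book graph $B_k$ ($k$ triangles glued along a common edge $uv$) is $2$-connected, $K_4$-minor-free, and has $2^{k}$ connected subsets containing $u$. Likewise, the $2$-connected blocks of a $K_{2,3}$-minor-free graph include all $2$-connected outerplanar graphs; the fan $F_k$ (a $k$-vertex path joined to an apex $u$) is $2$-connected, outerplanar, hence $K_{2,3}$-minor-free, and again has $2^{k}$ connected subsets through $u$. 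So the per-block enumeration you describe is exponential in the worst case, and the DP as written does not run in polynomial time. You mention that $K_4$-minor-free graphs have treewidth at most $2$, which is the right handle, but you never actually use it; the DP you set up is over the block--cut tree, not over a width-$2$ tree decomposition.

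The paper takes a different route. It proves a structural lemma: every $2$-connected $K_4$-minor-free graph contains a chordless cycle on which at most two vertices have degree greater than $2$. This ear can be collapsed in $O(n^2)$ time---its two high-degree endpoints are replaced by a single edge whose contribution is adjusted via IDM---yielding a strictly smaller $2$-connected $K_4$-minor-free instance, so $O(n)$ rounds give $O(n^3)$. For $K_{2,3}$-minor-free graphs the paper shows that every $2$-connected block is either $K_4$ itself or $K_4$-minor-free, and then reuses the same ear reduction. If you want to salvage a DP approach, run it over a width-$2$ tree decomposition (bags of size $3$, hence a bounded number of interface partitions per bag) rather than over raw blocks.
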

However, for planar graphs we prove the following hardness result.
\begin{theorem}\label{thm:NPhardPlanar}
The class of edge sum graph coalition structure generation problems over planar graphs is NP--complete. Moreover, a 3-SAT problem with $m$ clauses can be represented by a GCSG problem over a planar graph with $O(m^2)$ nodes. 
\end{theorem}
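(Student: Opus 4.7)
The plan is to reduce from general 3-SAT; membership in NP is clear since one can guess a partition and verify its edge-sum value in polynomial time. Given a formula $\varphi$ with $m$ clauses and at most $3m$ literal occurrences, I would construct in polynomial time a planar edge-weighted graph $G$ together with a threshold $T$ such that the optimal edge-sum GCSG value on $G$ equals $T$ if and only if $\varphi$ is satisfiable.

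The construction would use three gadget types. For each variable $x_i$, I would build a variable gadget: a cycle whose length is twice the number $k_i$ of occurrences of $x_i$, with edge weights tuned so that exactly two connected partitions of the cycle attain the same locally optimal value, corresponding to the assignments $x_i = \text{true}$ and $x_i = \text{false}$. Each of these two canonical partitions determines the coalition memberships of a designated sequence of \emph{port} nodes (one per occurrence), and any other partition of the cycle strictly loses a known margin $\Delta > 0$. For each clause $c_j = \ell_{j,1} \vee \ell_{j,2} \vee \ell_{j,3}$, I would attach a small clause gadget by connecting it to the three relevant ports, with weights chosen so that the gadget contributes an additional bonus $B$ to the edge sum iff at least one of its ports lies in the coalition that encodes ``literal true'', and contributes $0$ otherwise. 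Picking the variable weight $W$ large enough relative to $B$ ensures that no combination of clause bonuses can ever justify breaking a variable gadget out of one of its two canonical states.

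The graph assembled from these gadgets is not in general planar. To planarise it, I would embed the gadgets in the plane, producing at most $O(m^2)$ edge crossings between ports and clause gadgets, and replace each crossing with a constant-size planar \emph{crossover} gadget that faithfully swaps two independent truth-signals while adding a fixed bonus to the edge sum irrespective of which of the four input configurations passes through. Setting the threshold $T$ equal to the sum of the canonical variable-gadget values, the $m$ clause bonuses, and the total crossover contribution, the optimal GCSG value equals $T$ precisely when all clauses are simultaneously satisfiable. The node count is $O(\sum_i k_i) + O(m) + O(m^2) = O(m^2)$, matching the claim.

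The hard part will be the design and correctness proof of the crossover gadget. Unlike in classical Boolean-circuit reductions, here each ``signal'' is the assignment of a port node to a coalition, and the gadget must enforce that the two wires entering on one side exit unchanged on the other for all four input configurations, while contributing exactly the same fixed amount to the edge sum in each case; otherwise the optimiser could exploit the gadget to gain spurious value. This typically requires a careful case analysis over local partitions, combined with the use of large ``penalty'' weights on any edge whose inclusion in the wrong coalition would let one signal corrupt the other. Once the three gadget types are verified to behave as stated, a routine check of thresholds and a polynomial bound on the construction complete the reduction.
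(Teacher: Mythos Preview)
Your plan follows essentially the same route as the paper: reduce from 3-SAT, build variable, clause, and crossover gadgets, planarise by replacing $O(m^2)$ crossings with constant-size crossover gadgets, and read off satisfiability from whether the optimum hits a precomputed threshold. The paper carries this out concretely by first designing a handful of small ``components'' (triangles and a double-line piece) with explicit integer edge weights, each having a known maximal contribution attained only on one or two specified induced structures on its boundary nodes; it then assembles these into variable-propagation, negation, clause, and crossover constructs, and argues via a ``locally optimal'' framework (a coalition structure is locally optimal iff every component is at its own maximum) that a locally optimal structure exists iff $\varphi$ is satisfiable.

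One point worth tightening in your sketch: you encode a signal as ``the port lies in the coalition that encodes literal-true,'' but coalitions in a GCSG instance carry no names, so this is not well-defined across gadgets. The paper sidesteps this by encoding each signal \emph{relationally}: a wire is a designated \emph{pair} of nodes, and the bit is whether or not they lie in the same coalition. This makes the crossover and clause gadgets much cleaner to specify and verify, since one only has to control same-coalition relations among a few boundary nodes rather than track global coalition identities. If you rework your gadgets in those terms, your outline matches the paper's argument.
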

Note that Theorem~\ref{thm:NPhardPlanar} holds for all $K_k$ minor free graphs where $k \geq 5$, as planar graphs are a special case. This means we should expect it to take time exponential in $\sqrt{n}$ to solve a GCSG problem  over such graphs of size $n$. 
This suggests that the methods given in Theorems~\ref{thm:minorFree} and~\ref{thm:Planar}, which solve these problems in time exponential in $\log (n) \sqrt{n}$, are close to the best possible.

The following sections describe main results and techniques in more detail and contain all auxiliary lemmas, propositions and proofs. 


\section{General Graphs}\label{sec:general}
\noindent
As a first step, we examine the complexity of coalition structure generation over general graphs. For a graph $G=(N,E)$ with a set of nodes $N$ and a set of edges $E$, we denote $|N|=n$ and $|E|=e$. We show the following.
\begin{theorem}\label{thm:boundGeneral}
A general instance of a GCSG problem can be solved in $O\left(n^2 {{e+n}\choose{n}}\right)$ steps, using $O(n^2)$ sized memory.
\end{theorem}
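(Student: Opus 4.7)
The plan is to prove the bound via a recursive branching algorithm in the spirit of edge deletion--contraction. Fix once and for all an arbitrary ordering $e_1, \ldots, e_{|E|}$ of the edges, and maintain a current partition $P$ of $N$, initialised to the singleton partition $\{\{i\} : i \in N\}$. At recursion depth $i$, write $(u, w) = e_i$ and branch into two cases: a \emph{delete} branch that recurses on $i + 1$ with $P$ unchanged, and, provided $u$ and $w$ lie in distinct parts of $P$, a \emph{contract} branch that first merges those two parts and then recurses on $i + 1$. Once $i > |E|$ we evaluate and return $\sum_{C \in P} v(C)$; the top-level call returns the maximum over all leaves.

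For correctness I would prove two facts. Soundness, by induction along any root-to-leaf path: each part of the current $P$ is a connected subgraph of $G$, since singletons are connected and each contraction merges two connected parts along an edge of $G$. Completeness: any target connected coalition structure $\mathcal{C}^\star$ is produced at the leaf of the branch that at each $e_i = (u, w)$ chooses ``contract'' exactly when $u$ and $w$ lie in the same coalition of $\mathcal{C}^\star$ but in distinct parts of the current $P$ (and ``delete'' otherwise); since each $C^\star \in \mathcal{C}^\star$ is connected in $G$, its internal edges supply enough contractions to coalesce its vertices in $P$ before the edge list is exhausted.

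The crux of the analysis is bounding the number of leaves. Letting $L(n', e')$ denote this count when the recursion is entered with $n'$ parts and $e'$ edges still to process, the two branches give $L(n', 0) = 1$ and $L(n', e') \leq L(n', e' - 1) + L(n' - 1, e' - 1)$; an induction on $n' + e'$ using Pascal's identity yields $L(n', e') \leq \binom{n' + e'}{n'}$, so in particular $L(n, e) \leq \binom{n + e}{n}$. Each internal step does $O(n)$ work to update $P$ and log the change for backtracking, and each leaf does $O(n^2)$ work to name its at most $n$ coalitions and call the oracle for $v$, yielding the claimed $O(n^2 \binom{e + n}{n})$ running time.

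For the memory bound I would keep $P$ as a vertex-to-part map together with per-frame undo logs that are rewound after each recursive return, so that only the state along a single root-to-leaf path is ever materialised: along such a path the undo logs total $O(n \log n)$ if merges always move the smaller part, the partition map occupies $O(n)$, the recursion stack uses $O(e)$, and the edge list and adjacency information use $O(n + e)$, giving $O(n^2)$ overall. The main subtlety, I expect, will be the completeness argument: one must verify that the prescribed ``contract if same in $\mathcal{C}^\star$ but different in $P$, otherwise delete'' strategy really drives $P$ all the way to $\mathcal{C}^\star$ rather than to some strictly coarser partition, and this relies on the connectedness of each $C^\star$ guaranteeing the presence of enough internal edges of $G[C^\star]$ in the enumeration to perform all the required merges.
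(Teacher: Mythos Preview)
Your argument is correct and, at its core, coincides with the paper's: both proofs enumerate spanning forests of $G$ (each connected coalition structure is the set of components of some acyclic edge subset), bound their number by $\binom{e+n}{n}$, and charge $O(n^2)$ per candidate. The paper does this declaratively---count acyclic edge subsets of size at most $n-1$ by $\sum_{k\le n-1}\binom{e}{k}$ and telescope with Pascal's identity---while you reach the same bound via the deletion--contraction recurrence $L(n',e')\le L(n',e'-1)+L(n'-1,e'-1)$, which is exactly Pascal again. Your leaves are in bijection with spanning forests (the contracted edges along any root-to-leaf path form an acyclic set since each contraction strictly decreases the part count), so the two enumerations are really the same set viewed differently.

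What your framing buys is a cleaner algorithmic story: the partition is maintained incrementally rather than recomputed from scratch, and your memory accounting (undo logs plus an $O(e)$ stack, hence $O(n^2)$ since $e\le\binom{n}{2}$) is more explicit than the paper's one-line ``$O(n^2)$ to store each coalition''. The paper's version, on the other hand, is a two-line counting argument with no recursion to set up. Your completeness worry is not a real obstacle: since each $C^\star$ induces a connected subgraph, its internal edges in any order suffice to merge all its vertices under your contract-when-same-coalition rule, exactly as you say.
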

%
\begin{proof}
Every connected coalition structure over $N$ can be expressed as the connected components of some subgraph $G'=(N,E')$ of $G$, where $E'\subseteq E$. Moreover, each connected component has a spanning subtree, so we can restrict our attention to {\it acyclic} subgraphs of $G$. Such a subgraph has at most $n-1$ edges, and so there are at most $\sum_{k=1}^{n-1}{ {e}\choose{k}}$ such subgraphs. Since ${{a}\choose{b}} + {{a}\choose{b-1}} = {{a+1}\choose{b}}$ and ${{a}\choose{b}} \leq {{a+1}\choose{b}}$, this sum is bounded by ${e+n}\choose{n}$. Now, it takes at most $O(n^2)$ steps to determine the connected components of a subgraph, and, thus, there are at most $O\left(n^2 {{e+n}\choose{n}}\right)$ steps needed to check each coalition structure. Finally, it takes at most $O(n^2)$ sized memory to store each coalition as it is checked.
\end{proof}
This is an easy and not particularly promising result, as it may be exponential in $n \log(n)$ and is exponential in $n$ even for sparse graphs. Indeed, the class of graph coalition structure generation problems is NP--hard: it contains the subclass of GCSG problems over complete graphs, which is equivalent to the NP--complete class of standard coalition structure generation problems over node sets. We further show that the problem remains hard even for simple coalition valuation functions, such as edge sum.
\begin{theorem}\label{thm:NPhardGeneral}
The set of GCSGs with edge sum valuation functions is NP--complete.
\end{theorem}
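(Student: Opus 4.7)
The statement has two parts. Membership in NP is immediate: a coalition structure is a polynomial-size certificate, and one verifies in polynomial time both that each coalition induces a connected subgraph of $G$ and that the edge-sum value meets any prescribed threshold. The real work is the NP-hardness claim, which I would prove by a polynomial-time reduction from unweighted MAX-CUT on undirected graphs.

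Given an instance $H = (V, E_H)$ of MAX-CUT, I would build the GCSG instance $(G, v)$ as follows. Introduce two fresh apex vertices $s_1, s_2$; let $N = V \cup \{s_1, s_2\}$; and let $G$ have all edges of $H$, all edges $(s_i, u)$ for $i \in \{1, 2\}$ and $u \in V$, and the edge $(s_1, s_2)$. Set $v_e = -1$ for $e \in E_H$, $v_{s_i, u} = M$ for every $u \in V$ and $i \in \{1, 2\}$ with $M := |E_H| + 1$, and $v_{s_1, s_2} = -M'$ with $M' := M|V| + 1$. The instance is built in polynomial time, $G$ is connected, and the valuation is of edge-sum form. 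Because the apices are universally adjacent to $V$, every set of the form $\{s_i\} \cup S$ with $S \subseteq V$ is connected, so if an optimal coalition structure has the shape $\{\{s_1\} \cup A,\ \{s_2\} \cup B\}$ with $(A, B)$ a partition of $V$, a direct computation of the edge-sum objective gives total value $M|V| - |E_H| + \operatorname{cut}_H(A, B)$, and maximising it recovers a maximum cut in $H$.

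The main obstacle is therefore to show that the optimum really has this bipartite shape, and this is where I would spend most of the proof. I would establish two exchange claims. First, $s_1$ and $s_2$ must lie in distinct coalitions at optimum: placing them together forces the weight $-M'$ edge to be internal, and the largest offset from doubling the $+M$ apex incidences is only $M|V|$, which is strictly less than $M'$ by the choice of $M'$. Second, every $v \in V$ must share a coalition with some $s_i$ at optimum: moving $v$ into $s_i$'s coalition adds at least $M - \deg_H(v) > 0$ to the objective, since $M > |E_H|$, while the piece of $v$'s former coalition that becomes disconnected can be split into its connected components without any penalty, by the IDM property of edge-sum valuations already observed in Section~\ref{subsec:GCSG}. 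Because GCSG allows arbitrarily many coalitions, both claims must be phrased as local-improvement moves that dominate the full variety of candidate deviations, but the strict slacks $M > |E_H|$ and $M' > M|V|$ provide the needed margin. Combining the two claims, the GCSG optimum equals $M|V| - |E_H| + \mathrm{OPT}_{\mathrm{cut}}(H)$, which solves MAX-CUT; together with membership in NP this gives NP-completeness.
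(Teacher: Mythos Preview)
Your proof is correct and complete; it simply reduces from a different source problem than the paper does. The paper reduces from 3-SAT: it introduces a single apex $s$ joined by unit-weight edges to one node per literal occurrence, and places large negative edges within each clause triple and between complementary literal nodes, so that the $s$-coalition can collect at most one literal per clause and never both a variable and its negation; the optimum then equals the number $m$ of clauses precisely when the instance is satisfiable. Your reduction instead starts from MAX-CUT, using two apices $s_1,s_2$ forced apart by a prohibitively negative edge and pulling every original vertex into some apex's coalition via a large positive edge, after which the cut size drops out as the residual term in the objective.

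Both arguments rely on the same ``large weight dominates local structure'' mechanism, and your quantitative slacks $M>|E_H|\geq\deg_H(v)$ and $M'>M|V|$ are exactly what is needed to make the two exchange claims strict. Your route is arguably the more natural one for a partition objective---MAX-CUT is already a two-block partition problem, so the exchange analysis stays at two clean moves---and it also makes the NP-membership half explicit, which the paper leaves implicit. On the other hand, the paper's choice of 3-SAT is not accidental: it serves as a warm-up for the much more elaborate planar 3-SAT gadget reduction in Theorem~\ref{thm:NPhardPlanar}, so there is a narrative payoff to establishing the simple 3-SAT construction first.
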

\begin{proof}
We reduce from 3-SAT. 

Suppose we have a 3-SAT problem with variables $x_1, \ldots x_n$, and clauses $C_1 = (l_{1,1}, l_{1,2}, l_{1,3}), \ldots C_m = (l_{m,1}, l_{m,2}, l_{m,3})$. We create graph $G$ as follows. We define a node $n(l_{i,j})$ for each literal $l_{i, j}$, and add to this set of nodes one further node, $s$. We put an edge between $s$ and $n(l_{i,j})$ for every literal $l_{i,j}$, with $v_{s, l_{i,j}} = 1$. For each clause $C_i$ and literals $l_{i,j}$, $l_{i,k}$ we create an edge between $n(l_{i,j})$ and $n(l_{i,k})$ with $v_{n(l_{i,j}), n(l_{i,k})} = -(3m+1)$. Lastly, for each pair of literals $l_{i,j}$ and $l_{k,l}$ such that $l_{i,j}$ represents the negation of $l_{k,l}$ we put an edge between $n(l_{i,j})$ and $n(l_{k,l})$ with $v_{n(l_{i,j}), n(l_{k,l})} = -(3m+1)$.

We claim that the optimal coalition structure for this problem gives value $m$ if and only if the original 3-SAT problem is satisfiable. To see this, note that the total sum of positive edge values is $3m$, and so in any optimal connected coalition structure, there can be no coalition that includes an edge with negative value. So, for all variables $x_i$, there cannot be a literal representing $x_i$ and a literal representing $\overline{x}_i$ in the coalition that contains $s$. Furthermore, the coalition with $s$ in must be connected to at most one of $n(l_{i,1}), n(l_{i,2}), n(l_{i,3})$ for each clause $C_i$. Since all edges with positive values connect to $s$, this means that the optimal coalition structure has value at most $m$.

Suppose there is a coalition structure that gives value $m$. Then, we can satisfy the original 3-SAT problem as follows. For each variable $x_i$, we set $x_i = T$ if and only if there is at least one literal $l_{j,k}$  representing $x_i$, for which $n(l_{j,k})$ is in the same coalition as $s$. Now, for each clause $C_i$, the coalition that contains $s$ must also contain $n(l_{i,j})$ for some $j$. If $l_{i,j}$ represents some variable $x_k$, then $x_k = T$ and so $C_i = T$. Otherwise, $l_{i,j}$ represents $\overline{x}_k$ for some variable $x_k$. This means that no literal representing $x_k$ can be in the same coalition as $s$, and so $x_k = F$ and $C_i = T$ in our allocation. Thus, this allocation satisfies all clauses as claimed.

Conversely, suppose we have some assignment of boolean values to variables which satisfies all clauses in the original 3-SAT problem. Then, we can create a coalition which contains $s$ and exactly one node $n(l_{i,j})$ for each $C_i$, such that the literal $l_{i,j}$ takes the value $T$ under the 3-SAT assignment. If we create singleton coalitions for all other nodes, this results in a coalition structure which has total value equal to the value of the coalition containing $s$. Now, if two nodes $n(l_{i,j})$ and $n(l_{k,l})$ are neighbours, then they cannot both be in the same coalition as $s$. For either $i=k$, in which case only one can be in the same coalition as $s$ by construction, or else $l_{i,j}$ is the negation of $l_{k,l}$ and thus they cannot both be equal to $T$. Hence, the value of this coalition structure is $m$, and the result follows.
\end{proof}
Note that this result can be seen as a corollary of Theorem~\ref{thm:NPhardPlanar} showing the hardness of the edge sum GCSG over planar graphs. However, the proof for planar graphs is much more involved and lengthy, and therefore has been postponed to the end of the paper (see~\ref{subsec:planar}). 


\section{Separator Theorems}\label{sec:separator}
In this section, we prove our main technical result~(\thm{separator}). The proof we give is constructive, and thus yields an algorithm for solving an instance of the problem. We start with an auxiliary lemma below, and then proceed with the proof of the theorem.
\begin{lemma}\label{lem:valuation}
Given a graph $G$ and a coalition valuation function $v(\cdot)$, suppose that $G$ has two edge disjoint subgraphs, $A$ and $B$, which cover $G = A \cup B$ and share some nodes $A \cap B = D$. Suppose further than we have a connected coalition structure $\mathcal{A}$ over $A$, which induces a coalition structure $\mathcal{D}$ over $D$. Then, the coalition valuation function $v^{\mathcal{D}}(\cdot)$ defined over connected coalitions in $B$ as
{\small \[
v^{\mathcal{D}}(F) = v\Bigl(  F \cup \bigcup_{C \in \mathcal{A} : 
C \cap F \neq \emptyset }  C \Bigr)
- \sum_{C \in \mathcal{A} : C \cap F \neq \emptyset } v(C),
\]}
is well defined and independent of disconnected members. Furthermore, if there is a connected coalition structure $\mathcal{B}$ over $B$
and if $\mathcal{C}$ is the coalition structure over $G$ that comes from 
combining $\mathcal{A}$ and $\mathcal{B}$, then
{\small \[
\sum_{C \in \mathcal{C}} v(C) = \sum_{C \in \mathcal{A}} v(C) +
\sum_{C \in \mathcal{B}} v^{\mathcal{D}}(C).
\]}
\end{lemma}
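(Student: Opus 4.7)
The proof has three parts: well-definedness of $v^{\mathcal{D}}$, its IDM property, and the additive decomposition of $v(\mathcal{C})$. I would proceed in that order, using iterated applications of the IDM property of $v$ for the first two parts and a combinatorial cancellation for the third. The key structural observation throughout is that, since $A$ and $B$ are edge-disjoint with $A \cup B = G$, any $G$-edge with one endpoint in $V(A) \setminus D$ and the other in $V(B) \setminus D$ would have to lie in both $A$ and $B$, which is impossible. Hence $D$ separates $V(A) \setminus D$ from $V(B) \setminus D$ in $G$, and for any $C \in \mathcal{A}$ and $F \subseteq V(B)$ we have $C \cap F \subseteq D$, so $\mathcal{A}_F := \{C \in \mathcal{A} : C \cap F \neq \emptyset\}$ is determined by the $\mathcal{D}$-parts meeting $F$. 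Well-definedness is then immediate: the formula assigns a specific real number to each connected $F \subseteq V(B)$, with $v^{\mathcal{D}}(\emptyset)=0$ via the empty-union convention.

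For IDM, I would take $i,j$ in the domain with a vertex separator $C'$ and expand the four quantities $v^{\mathcal{D}}(C'), v^{\mathcal{D}}(C' \cup \{i\}), v^{\mathcal{D}}(C' \cup \{j\}), v^{\mathcal{D}}(C' \cup \{i,j\})$ via the definition. The crux is that $\tilde{C} := C' \cup \bigcup_{C \in \mathcal{A}_{C' \cup \{i,j\}}} C$ forms a $G$-separator of $i$ from $j$: a $G$-path between them either stays in $V(B)$ (blocked by $C'$) or passes through $V(A) \setminus D$ via $D$-nodes absorbed into the merged $\mathcal{A}$-coalitions. Applying IDM of $v$ to $(\tilde{C}, i, j)$ and noting that $\mathcal{A}_F$ is the same across the four choices of $F$ (since $i,j \notin D$ generically, the same $\mathcal{D}$-parts are touched), the subtracted $\sum v(C)$ terms cancel four ways to yield the IDM identity for $v^{\mathcal{D}}$.

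For the decomposition, each coalition $X \in \mathcal{C}$ is either a standalone $\mathcal{A}$- or $\mathcal{B}$-coalition disjoint from $D$, or the merger of some $F \in \mathcal{B}$ together with the $\mathcal{A}$-coalitions $C \in \mathcal{A}_F$ sharing $D$-nodes with it. Writing $\sum_{C \in \mathcal{A}} v(C) + \sum_{F \in \mathcal{B}} v^{\mathcal{D}}(F)$ and expanding the definition of $v^{\mathcal{D}}(F)$, the subtracted $v(C)$ contributions cancel against their $v(C)$ appearances in $\sum_{C \in \mathcal{A}} v(C)$, leaving exactly $\sum_{X \in \mathcal{C}} v(X)$. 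The main obstacle is the IDM step: identifying the correct $G$-separator that lifts a separator in the ambient graph of $v^{\mathcal{D}}$, and verifying the four-term telescoping; the remaining parts reduce to careful bookkeeping once $D$'s separation role in $G$ is in hand.
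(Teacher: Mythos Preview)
Your proposal has two genuine gaps.

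\textbf{Well-definedness.} You treat this as the triviality that the formula outputs a real number. But the function is written $v^{\mathcal{D}}$ while its defining expression uses the full coalition structure $\mathcal{A}$; the content of ``well defined'' is that two different connected coalition structures $\mathcal{A}$ and $\mathcal{E}$ over $A$ that induce the \emph{same} $\mathcal{D}$ on $D$ give the same value. Your observation that $\mathcal{A}_F$ is determined by which $\mathcal{D}$-parts meet $F$ does not settle this: the actual sets $C\in\mathcal{A}_F$ (and hence both the big union and the subtracted $v(C)$'s) still depend on $\mathcal{A}$, not just on $\mathcal{D}$. The paper handles this by rewriting $v^{\mathcal{D}}(F)$ as $v(F\cup U)-v(U)$ where $U=\bigcup_{C\in\mathcal{A}_F}C$, and then invoking IDM of $v$ together with the fact that $A\setminus D$ is disconnected from $F\setminus D$ to show that this difference is unchanged if $\mathcal{A}$ is replaced by $\mathcal{E}$.

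\textbf{The decomposition identity.} Your cancellation argument assumes implicitly that each $C\in\mathcal{A}$ meeting $D$ lies in $\mathcal{A}_F$ for exactly one $F\in\mathcal{B}$, and that each $X\in\mathcal{C}$ is the merger of a \emph{single} $F\in\mathcal{B}$ with its $\mathcal{A}_F$. Neither is true in general: a single $C\in\mathcal{A}$ can meet several $\mathcal{B}$-coalitions at different $D$-nodes, so $v(C)$ is subtracted multiple times in $\sum_{F\in\mathcal{B}}v^{\mathcal{D}}(F)$, and conversely a single $X\in\mathcal{C}$ may contain several $B$-components $B_1,\dots,B_m$ and several $A$-components $A_1,\dots,A_n$. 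Plain bookkeeping therefore leaves you with $\sum_i v\bigl(B_i\cup(\text{its }A_j\text{'s})\bigr)$ minus an overcounted sum of $v(A_j)$'s, not $v(X)$. The paper closes this gap by a second use of IDM: it shows $v^{\mathcal{D}}(B_i)=v(B_i\cup C_{i-1})-v(C_{i-1})$ with $C_{i-1}=(X\cap A)\cup B_1\cup\cdots\cup B_{i-1}$, so that $\sum_j v(A_j)+\sum_i v^{\mathcal{D}}(B_i)$ telescopes to $v(X)$. You need this telescoping step (or an equivalent IDM argument); the straight cancellation you describe is not enough.
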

%
\begin{proof}
Since $v(\cdot)$ has the IMD property, if there are two connected coalition structures, $\mathcal{A}$ and $\mathcal{E}$ over $A$ such that both induce $\mathcal{D}$, then for any $F \subseteq B$,
{\small \begin{eqnarray*}
&& v\Bigl(  F \cup \bigcup_{C \in \mathcal{A} : 
C \cap F \neq \emptyset }  C \Bigr)
- \sum_{C \in \mathcal{A} : C \cap F \neq \emptyset } v(C)
\\
&& = v\Bigl(  F \cup \bigcup_{C \in \mathcal{A} : 
C \cap F \neq \emptyset }  C \Bigr)
-
v\Bigl(  \bigcup_{C \in \mathcal{A} : 
C \cap F \neq \emptyset }  C \Bigr),
\\
&& = v\Bigl(  F \cup \bigcup_{C \in \mathcal{E} : 
C \cap F \neq \emptyset }  C \Bigr)
-
v\Bigl(  \bigcup_{C \in \mathcal{E} : 
C \cap F \neq \emptyset }  C \Bigr),
\end{eqnarray*}}
since we have $\{ C \cap F : C \in \mathcal{A} \} = \{ C \cap F : C \in \mathcal{E} \}$ and $A \setminus D$ is disconnected from $F \setminus D$. Hence, $v^{\mathcal{D}}(\cdot)$ is well defined.

Now, suppose there is a connected coalition structure $\mathcal{B}$ over $B$, and $\mathcal{C}$ is the coalition structure over $G$ that comes from combining $\mathcal{A}$ and $\mathcal{B}$. Then, for $C \in \mathcal{C}$, if we let $A_1, \ldots, A_n$ be the connected components of $C \cap A$ and $B_1, \ldots, B_m$ be the connected components of $C \cap B$, we must have that the $A_j$, $j=1,\ldots,n$, are in $\mathcal{A}$ and the $B_i$, $i=1,\ldots,m$, are in $\mathcal{B}$. Furthermore, $v(C \cap A) = \sum_{j=1}^n v(A_j)$, and for each $B_i$,
{\small \begin{eqnarray*}
v^{\mathcal{D}}(B_i) & = & v
\Bigl(B_i \cup \bigcup_{j : A_j \cap B_i \neq \emptyset} A_j  \Bigr)
- \sum_{j : A_j \cap B_i \neq \emptyset} v(A_j),
\\
& = & v\Bigl(B_i \cup \bigcup_{j = 1}^n A_j  \Bigr)
- \sum_{j = 1}^n v(A_j),
\end{eqnarray*}}
since the $A_j$ are disconnected from each other. Thus, for all $i$
{\small \begin{eqnarray*}
v^{\mathcal{D}}(B_i) & = & v(B_i \cup (C \cap A)  )
- v(C \cap A ),
\\
& = & v(B_i \cup C_{i-1}) - v(C_{i-1}),
\end{eqnarray*}}
where $C_0 = (C \cap A)$ and $C_i = (C \cap A) \cup (\cup_{k=1}^{i-1} B_k)$, by the independence of disconnected members. Hence,
{\small \[
\sum_{j = 1}^n v(A_j) +
\sum_{i = 1}^m v^{\mathcal{D}}(B_i) = v(C_m) = v(C).
\]}
Taking the sum over all $C \in \mathcal{C}$ gives the final result.
\end{proof}
%
\begin{proof}[Proof of~\thm{separator}]
Let $\gamma = (\alpha + \beta)/2$. If $g(n)$ is bounded then the problem is trivial and can be solved in $O(1)$ steps. Otherwise, there is some $q$ with $g(n) \geq |\log(4)/ c \log(\gamma)| $ for $n\geq q$, where $c = \frac{1}{2} \Bigl( \frac{1}{|\log(\beta)|} - \frac{1}{|\log(\gamma)|}\Bigr)$. There exists some $k > q$ such that for all $n \geq k$ we have $\alpha n + f(n) < \gamma n$ and 
{\small \[
\Bigl\lfloor \frac{log(n)}{ |\log(\beta)|} \Bigr\rfloor - \Bigl\lfloor  \frac{log(n)}{ |\log(\gamma)|}
\Bigr\rfloor \geq c \log(n)  + \frac{2 log(q)}{|\log(\beta)|} g(q).
\]}
Let us define the function $u(\cdot, \cdot, \cdot)$ for $r \leq n$ as follows:
{\small \[
u(\beta, n,r) = \sum_{i=0}^{\lfloor \log(n) / |\log(\beta)| \rfloor} 2 \log(g(f(\beta^i n)))
+  2 f(\beta^i n) 
\log\Bigl( r + \sum_{j=0}^{i-1} f(\beta^j n) \Bigr).
\]}
Now, for any $n \geq k$ and $d = \frac{\log(4)}{|\log(\gamma)|}$ we have
{\small \[
u(\beta, n, r) - u(\gamma, n, r)  \geq c \log(n) g(q) \geq
d \log(n).
\]}
Furthermore,
{\small 
\begin{eqnarray}
u(\gamma, n, r) - u(\gamma, \alpha n + f(n), r + f(n))
\leq u(\gamma, n, r) - u(\gamma, \gamma n, r + f(n)) \nonumber \\
= 2 \log(g(f(n))) + 2 f(n) \log(r). \label{eqn:kchoice}
\end{eqnarray}}
Suppose there is a constant $K$, and an integer $m > k$ such that for all $G = (N,E) \in S$ with $|N| < m$, for any 
subset $D \subseteq N$, any (not necessarily connected) coalition structure $\mathcal{D}$ on $D$ and any IMD valuation function $v(\cdot)$, we can find the connected coalition structure $\mathcal{C}$ over $G$ which maximises $\sum_{C \in \mathcal{C}} v(C)$ under the constraint that $\mathcal{C}$ induces $\mathcal{D}$ when projected onto $D$, in at most 
{\small \[
K \exp( u(\gamma, n,r)) n^d,
\]}
computation steps, where $n = |N|$ and $r = |D|$. Let us assume that $K$ is large enough so that for all $n$ we have that $s(n) \leq K\exp(h(\gamma, \gamma n))/2$, where $s(n)$ is the maximum number of computation steps required to 
find a separator of size $f(n)$ for a graph with $n$ nodes in $S$.

Now, let us consider a graph $G = (N,E) \in S$ with $|N| = m$, a subset $D \subseteq N$ with a (not necessarily connected) coalition structure $\mathcal{D}$ on $D$ and a IMD valuation function $v(\cdot)$. We will show that we can solve the optimisation problem to maximise $\sum_{C \in \mathcal{C}} v(C)$ over connected coalition structures $\mathcal{C}$ which induce $\mathcal{D}$ over $D$, in at most 
{\small \[
K \exp( u(\gamma, m,r)) m^d,
\]}
computation steps, where $r = |D|$.  Since we can pick an appropriate $K$ for the case where $m=k$, and $u(\gamma, n, 0) n^d \leq h(\beta, n)$, the result will then follow by induction.

First, we find edge disjoint subgraphs $A$ and $B$ such that $A \cup B = G$, $|A \cap B|$ has at most $ f(m)$ nodes and both $A$ and $B$ each have at most $\alpha m +  f(m)$ nodes. Then, for every pair of coalition structures $\mathcal{A}$ over $A \cap (D \cup B)$ and $\mathcal{B}$ over $B \cap (D \cup A)$, we do the following computation. 
We check to see if combining $\mathcal{A}$ and $\mathcal{B}$ together results in a coalition structure that induces $\mathcal{D}$ over $D$. If not, we move on to the next pair of coalition structures. Otherwise, we let $\mathcal{G}$ 
be the coalition structures which $\mathcal{A}$ induces on $D$. Then, we find the connected coalition structure
$\mathcal{E}$ over $A$ which maximises $\sum_{C \in \mathcal{E}} v(C)$ under the constraint that $\mathcal{E}$ induces $\mathcal{A}$ over $A \cap (D \cup B)$, and the connected coalition structure $\mathcal{F}$ over $B$ which maximises $\sum_{C \in \mathcal{F}} v^{\mathcal{G}}(C)$, under the constraint that $\mathcal{F}$ induces $\mathcal{B}$ over $B \cap (D \cup A)$. If either of these problems is infeasible, we move on to the next pair of 
coalition structures. Otherwise, for $\mathcal{E}$ and $\mathcal{F}$, the optimal solutions to these problems, 
we compute
{\small \[
\sum_{C \in \mathcal{E}} v(C) + \sum_{C \in \mathcal{F}} v^{\mathcal{G}}(C).
\]}
This is equal to $\sum_{C \in \mathcal{C}} v(C)$, where $\mathcal{C}$ is the coalition structure over $G$ formed by combining $\mathcal{E}$ and $\mathcal{F}$. We then move on to the next pair of coalition structures, $\mathcal{A}$ and $\mathcal{B}$.

While doing these calculations, we store the current maximum value of $\sum_{C \in \mathcal{C}} v(C)$, along with the relevant $\mathcal{C}$. When the process has completed, this stored maximum $\mathcal{C}$ will be the connected coalition structure which maximises $\sum_{C \in \mathcal{C}} v(C)$ under the constraint that $\mathcal{C}$ induces $\mathcal{D}$ over $D$. Let $\mathcal{C}^*$ be a connected coalition structure which is an optimal solution to this problem. Then, let $E(\mathcal{C}^*)$ be the subgraph of $G$ created from the set of edges which connect two nodes from the same coalition in $\mathcal{C}^*$. We can define a pair of coalition structures $\mathcal{E}^*$ and $\mathcal{F}^*$ by setting coalitions to be the connected components of the induced subgraphs of $E(\mathcal{C}^*)$ over $A$ and $B$ respectively. Since $A \cap B$ is a vertex separator, combining those two induced subgraphs would recover $E(\mathcal{C}^*)$ and hence $\mathcal{C}^*$ can be recovered by combining $\mathcal{E}^*$ and $\mathcal{F}^*$. Now, let $\mathcal{A}^*$ and $\mathcal{B}^*$ be the induced coalition structures of $\mathcal{E}^*$ and $\mathcal{F}^*$ over $A \cap (D \cup B)$ and $B \cap (D \cup A)$, respectively. The result of the above optimisation for the pair of coalition structures $\mathcal{A}^*$ and $\mathcal{B}^*$ must give a value of $\sum_{C \in \mathcal{E}} v(C) + \sum_{C \in \mathcal{F}} v^{\mathcal{G}}(C)$ at least as good as if $\mathcal{E} = \mathcal{E}^*$ and $\mathcal{F} = \mathcal{F}^*$. However, since
{\small \[
\sum_{C \in \mathcal{E}^*} v(C) + \sum_{C \in \mathcal{F}^*} v^{\mathcal{G}}(C)
= \sum_{C \in \mathcal{C}^*} v(C),
\]}
the result must be exactly optimal, and so the resulting output of the above algorithm, $\mathcal{C}$, must be an optimal solution.

Now, to speed up computation, we can choose our pair of coalition structures $\mathcal{A}$ and $\mathcal{B}$ by
first choosing two coalition structures over $(A \cap B) \setminus D$, and then connecting the coalitions from the first to at most one coalition from those induced on $A \cap D$ by $\mathcal{D}$ and connecting the coalitions from the second to at most one coalition from those induced on $B \cap D$ by $\mathcal{D}$. This only excludes possibilities which do not induce $\mathcal{D}$ over $D$ and thus reduces the number of possibilities to check to at most $g(f(m))^2 \max(r,1)^{2  f(m)}$. For each pair $\mathcal{A}$, $\mathcal{B}$, the optimisation takes fewer
computation steps than
{\small \[
2 K  \exp(u(\gamma, \alpha m + f(m), r + f(m)) (\alpha m + f(m))^d,
\]}
which is less than or equal to
{\small \[
2 K  \exp(u(\gamma, \gamma m, r + f(m)) \gamma^c m^d.
\]}
Thereby, the total operation of this algorithm takes at most
{\small \begin{eqnarray*}
&& s(m) +
2 g(f(m))^2 \max(r,1)^{2  f(m)}
K  \exp(u(\gamma, \gamma m, r + f(m)) \gamma^d m^d
\\
&& \leq s(m) + \frac{1}{2} 
g(f(m))^2 \max(r,1)^{2  f(m)}
K  \exp(u(\gamma, \gamma m, r + f(m)) m^d
\\
&& \leq g(f(m))^2 \max(r,1)^{2  f(m)}
K  \exp(u(\gamma, \gamma m, r + f(m)) m^d
\end{eqnarray*}}
computation steps. From (\ref{eqn:kchoice}), this is less than or equal 
to $K  \exp(u(\gamma, m, r ) m^d$,
as required. The main result of~\thm{separator} then follows by induction on $m$.
\end{proof}
The next corollaries follow immediately.
\begin{corollary}\label{cor:sep1}
Suppose a class of graphs $S$ is closed under taking subgraphs and there is an increasing function $g(n)$ such that for all $G = (N, E) \in S$ with $|N| = n$, $G$ has at most $g(n)$ possible connected coalition structures.  Suppose further that $S$ satisfies an $f(n)$-separator theorem with  constant $\alpha < 1$, and that for any $G = (N,E) \in S$ such a separator can be found in at most $o(g(f(n))^2 f(n)^{2f(n)})$ time, where  $f(n)$ is an increasing function that is $o(n)$. Then, for any $1 > \beta > \alpha$, an instance of the GCSG problem over a graph from $S$
can be solved in
{\small\[
O\bigl(n^{(log(2) c f(n)^2 + 2 c f(n) log(c log(n) f(n))} \bigr),
\]}
computation steps, where $c = 1/|\log(\beta)|$.
\end{corollary}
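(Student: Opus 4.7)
The plan is to apply \thm{separator} directly, simplifying the expression for $h(\beta, n)$ using the hypotheses of the corollary. The two tasks are: (i) verify that the assumed separator-finding time meets the $o(\exp(h(\alpha, n)))$ requirement of the theorem, and (ii) bound $h(\beta, n)$ from above in terms of the quantities appearing in the corollary.

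For (ii), I let $c = 1/|\log(\beta)|$, so that the sum defining $h(\beta, n)$ has at most $\lfloor c \log(n) \rfloor + 1$ terms. Since $f$ is increasing and $\beta^i n \leq n$, every $f(\beta^i n) \leq f(n)$ and $g(f(\beta^i n)) \leq g(f(n))$. The key quantitative input will be the bound $g(k) \leq 2^{\binom{k}{2}}$, which holds because every connected coalition structure on a $k$-node subgraph arises as the set of connected components of some edge subset (take the within-coalition edges). Hence $\log g(f(n)) \leq \tfrac{1}{2} \log(2)\, f(n)^2$, and the $\log g$ terms contribute at most $(c \log n + 1) \log(2)\, f(n)^2$ to $h(\beta, n)$. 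For the remaining summands, I would use $\sum_{j=0}^{i-1} f(\beta^j n) \leq c \log(n)\, f(n)$ to bound each term by $2 f(n) \log(c \log(n) f(n))$, with a total of at most $(c \log n + 1) \cdot 2 f(n) \log(c \log(n) f(n))$. Combining gives
\[
h(\beta, n) \leq \log(n)\, \bigl[\log(2)\, c\, f(n)^2 + 2\, c\, f(n) \log(c \log(n) f(n))\bigr] + O(f(n)^2),
\]
and exponentiation yields the claimed complexity. The $i = 0$ summand, where the inner sum is empty, is interpreted via the $\max(\cdot, 1)$ convention already present in the proof of \thm{separator}.

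For (i), the main technical point is to show that the corollary's hypothesis $o(g(f(n))^2 f(n)^{2 f(n)})$ falls within $o(\exp(h(\alpha, n)))$. Here $\exp(h(\alpha, n))$ factors as a product of $\Theta(\log n)$ quantities of the form $g(f(\alpha^i n))^2 \cdot \bigl(\sum_j f(\alpha^j n)\bigr)^{2 f(\alpha^i n)}$, whose combined growth (after summing the series $\sum_i f(\alpha^i n)$) comfortably dominates $g(f(n))^2 f(n)^{2 f(n)}$ in the regime where $f$ is $o(n)$. I expect this verification to be the finickiest part of the argument, but it is routine once the correct bookkeeping is set up; the substantive new content of the corollary lies in step (ii), driven by the edge-subset bound on $g$.
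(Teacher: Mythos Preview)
Your approach is correct and matches the paper's: apply \thm{separator}, bound $h(\beta,n)$ from above by replacing each of the at most $c\log n$ summands by its maximum, and verify the separator-time hypothesis via a lower bound on $h(\alpha,n)$. Two small remarks: for (i) the paper does this in one line by observing $h(\alpha,n)\ge 2\log g(f(n))+2f(n)\log f(n)$ (extracting the $i=0$ and $i=1$ contributions), which makes the containment $o\bigl(g(f(n))^2 f(n)^{2f(n)}\bigr)\subseteq o\bigl(\exp h(\alpha,n)\bigr)$ immediate without the product-of-$\Theta(\log n)$-factors bookkeeping you sketch; and your explicit use of $g(k)\le 2^{\binom{k}{2}}$ is exactly the step that turns the paper's upper bound $h(\beta,n)\le c\log(n)\bigl[2\log g(f(n))+2f(n)\log(c\log(n)f(n))\bigr]$ into the stated exponent $\log(2)\,c\,f(n)^2+2cf(n)\log(c\log(n)f(n))$, a substitution the paper's two-line proof leaves implicit.
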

\begin{proof}
The result follows from~\thm{separator} and the observations that
{\small \[
h(\alpha, n) \geq 2 log(g(f(n))) + 2 f(n) log(f(n)),
\]}
and
{\small \[
h(\beta, n) \leq \frac{\log(n)}{|\log(\beta)|} 2 \log(g(f(n))) +
2 f(n) \log\Bigl(  \frac{\log(n)}{|\log(\beta)|} f(n) \Bigr).
\]}
\end{proof}
\begin{corollary}\label{cor:sep2}
Suppose a class of graphs $S$ is closed under taking subgraphs and there is an increasing function
$\mu(n) \leq n^2/2$ such that for all $G = (N, E) \in S$ with $|N| = n$, $|E| \leq \mu(n)$. Suppose further that $S$
satisfies an $f(n)$-separator theorem with constant $\alpha < 1$, and that for any $G = (N,E) \in S$ such a separator can be found in at most $o(2^{2 \mu(f(n))} f(n)^{2f(n)})$ time, where $f(n)$ is an increasing function that is $o(n)$. Then, for any $1 > \beta > \alpha$, an instance of the GCSG problem over a graph from $S$
can be solved in
{\small\[
O\bigl(n^{(2 log(2) c \mu
(f(n)) + 2 c f(n) log(c log(n) f(n))} \bigr),
\]}
computation steps, where $c = 1/|\log(\beta)|$.
\end{corollary}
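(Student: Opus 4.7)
The plan is to apply Corollary~\ref{cor:sep1} with the specific choice $g(n) = 2^{\mu(n)}$, so as to replace the worst-case estimate on the number of connected coalition structures that is implicit in Corollary~\ref{cor:sep1} by the sharper bound coming from $\mu$. The first step is to establish that $g(n) = 2^{\mu(n)}$ is indeed a valid bound on the number of connected coalition structures of any $G = (N, E) \in S$ with $|N| = n$. This is exactly the observation underlying the proof of Theorem~\ref{thm:boundGeneral}: every connected coalition structure of $G$ is the partition into connected components of some edge subgraph $(N, E')$ with $E' \subseteq E$, and there are at most $2^{|E|} \leq 2^{\mu(n)}$ such subsets. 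Since $\mu$ is increasing, so is $g$.

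Next, I would check that the separator-time hypothesis of Corollary~\ref{cor:sep1} is satisfied with this choice: the requirement $o(g(f(n))^2 f(n)^{2 f(n)})$ becomes $o(2^{2 \mu(f(n))} f(n)^{2 f(n)})$, which is exactly what Corollary~\ref{cor:sep2} assumes. Closure under subgraphs, the $f(n)$-separator theorem, and $f(n) = o(n)$ are inherited verbatim.

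Finally, I would plug into the exponent. The $\log(2)\, c\, f(n)^2$ term appearing in Corollary~\ref{cor:sep1} arises by combining the estimate $h(\beta, n) \leq 2 c \log(n) \log(g(f(n))) + 2 f(n) \log(c \log(n) f(n))$ used in its proof with the trivial worst-case bound $\log g(n) \leq (\log 2)\, n^2/2$. Substituting instead $\log g(f(n)) = (\log 2)\, \mu(f(n))$ turns this into $2 (\log 2)\, c\, \mu(f(n))$, while the second term is unchanged; this yields the claimed bound $O\bigl(n^{2 \log(2) c \mu(f(n)) + 2 c f(n) \log(c \log(n) f(n))}\bigr)$.

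The only real obstacle is that Corollary~\ref{cor:sep1}, as stated, has already absorbed $g$ into the worst-case expression $\log(2)\, c\, f(n)^2$, so one cannot simply invoke it as a black box. Instead, one must re-run the short estimate on $h(\beta, n)$ from its proof --- or equivalently apply Theorem~\ref{thm:separator} directly --- with the sharper $g(n) = 2^{\mu(n)}$. This is routine bookkeeping rather than substantive mathematics.
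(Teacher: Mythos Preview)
Your approach is exactly the paper's: set $g(n)=2^{\mu(n)}$ via the edge-subset count and feed it through Corollary~\ref{cor:sep1} (equivalently, the bound on $h(\beta,n)$ from its proof). Your caveat that Corollary~\ref{cor:sep1} cannot quite be used as a black box---because its stated conclusion already specializes to the worst case $\mu(n)=n^2/2$---is well taken; the paper's one-line proof glosses over this and implicitly re-runs the estimate with the sharper $g$, just as you propose.
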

\begin{proof}
This follows from Corollary~\ref{cor:sep1} and the observation that for all $G = (N,E) \in S$ with $|N| = n$, $|E| \leq \mu(n)$, the number of connected coalition structures over $G$ is less than or equal to the number of subsets of edges, which is bounded by $2^{\mu(n)}$.
\end{proof}


\section{Minor Free and Planar Graphs}\label{sec:minorFree}
\noindent
In this section, we apply~\thm{separator} to obtain computational bounds for minor free and planar graphs. We begin with a technical result.
\begin{proposition}\label{prop:minor}
Suppose a class of graphs $S$ is closed under taking subgraphs and for some constants $K, a, b>0$, for all $G = (N, E) \in S$ with $|N| = n$, $G$ has at most $K \exp(a n^b)$ possible connected coalition structures.  Suppose further that for some constants $L, c > 0$, $S$ satisfies an $L n^c$-separator theorem, with constant $\alpha < 1$, and that for any $G \in S$ such a separator can be found in $o(\exp(2 a L^b n^{b c}) n^{2 c L n^c} )$ time. Then, if $b \leq 1$, for any $\gamma > 2 L c / (1 - \alpha^c)$, the GCSG problem over a graph from $S$ requires $O(n^{\gamma n^c})$ computation steps. If $b > 1$, then for any $\gamma > 2 a L^b/ (1 - \alpha^{b c})$ the problem can be solved in $O(\exp(\gamma n^{b c}))$ computation steps.
\end{proposition}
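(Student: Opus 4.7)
The approach is to apply \thm{separator} directly with the explicit forms $g(n) = K\exp(a n^b)$ and $f(n) = L n^c$, estimating the exponent $h(\beta, n)$ by elementary geometric-series arithmetic. The separator-finding hypothesis of the proposition matches, up to lower-order factors, the quantity $g(f(n))^2 f(n)^{2f(n)} = K^2 \exp(2aL^b n^{bc})\, n^{2cLn^c + O(n^c)}$ appearing in \cor{sep1}, so the applicability of \thm{separator} for any $1 > \beta > \alpha$ is immediate; it then remains only to bound the exponent produced by the theorem.

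The first concrete step is to split $h(\beta, n) = \Sigma_1 + \Sigma_2$, where
\[
\Sigma_1 = \sum_{i=0}^{I} 2\log g(f(\beta^i n)),
\qquad
\Sigma_2 = \sum_{i=0}^{I} 2 f(\beta^i n)\log\!\Bigl(\sum_{j=0}^{i-1} f(\beta^j n)\Bigr),
\]
with $I = \lfloor \log(n)/|\log\beta|\rfloor$. Substituting $\log g(f(\beta^i n)) = \log K + aL^b \beta^{ibc} n^{bc}$ and summing the geometric series would give $\Sigma_1 \leq (2aL^b/(1-\beta^{bc}))\, n^{bc} + O(\log n)$. Using $\sum_{j=0}^{i-1} f(\beta^j n) \leq Ln^c/(1-\beta^c)$ inside the logarithm and summing $\sum_i L\beta^{ic} n^c \leq Ln^c/(1-\beta^c)$ outside would give $\Sigma_2 \leq (2cL/(1-\beta^c))\, n^c \log n + O(n^c)$.

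The two cases of the proposition then fall out according to which of the two leading terms dominates. If $b \leq 1$, then $n^{bc} \leq n^c$ and the $n^c \log n$ contribution of $\Sigma_2$ wins, giving $\exp(h(\beta,n)) = O(n^{(2cL/(1-\beta^c))(1+o(1))\, n^c})$; since $\beta \mapsto 2cL/(1-\beta^c)$ is continuous and increasing on $(\alpha,1)$, for any $\gamma > 2Lc/(1-\alpha^c)$ I can pick $\beta$ close enough to $\alpha$ that the coefficient is strictly less than $\gamma$, yielding $O(n^{\gamma n^c})$. If $b > 1$, then $(b-1)c > 0$ makes $n^{bc}$ eventually dominate $n^c \log n$ and $\Sigma_1$ wins, giving $\exp(h(\beta,n)) = O(\exp((2aL^b/(1-\beta^{bc}))(1+o(1))\, n^{bc}))$, and the analogous continuity argument in $\beta$ delivers $O(\exp(\gamma n^{bc}))$ for any $\gamma > 2aL^b/(1-\alpha^{bc})$.

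There is no genuinely hard step here: the proof is essentially bookkeeping of two convergent geometric sums together with an elementary continuity argument to absorb the slack between the chosen $\beta$ and the given $\gamma$. The only point that requires attention is the case distinction, which is exactly the source of the two different-looking conclusions: for $b \leq 1$ the bottleneck is the separator-depth cost $n^c \log n$ contributed by $\Sigma_2$, whereas for $b > 1$ it is the number $\exp(aL^b n^{bc})$ of connected coalition structures on each separator-sized subproblem, contributed by $\Sigma_1$.
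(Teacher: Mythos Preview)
Your proposal is correct and follows essentially the same route as the paper: apply \thm{separator} with $g(n)=K\exp(an^b)$ and $f(n)=Ln^c$, bound the two pieces of $h(\beta,n)$ by geometric sums to get the leading terms $(2aL^b/(1-\beta^{bc}))n^{bc}$ and $(2cL/(1-\beta^c))n^c\log n$, and then use the gap between $\alpha$ and the target $\gamma$ to absorb the lower-order contributions. The paper makes this last step explicit by setting $\beta'=(\alpha+\beta)/2$ with $\beta$ chosen so that $\gamma$ equals the relevant coefficient at $\beta$, which is exactly your continuity argument in slightly more concrete form.
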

%
\begin{proof}
Using the terminology of Theorem~\ref{thm:separator}, for all $n$,
{\small \begin{eqnarray*}
h(\alpha, n) & \geq & 2 \log(  g(f(n)) )+ 2 f(n) \log(f(n))\\
& = & 2 \log(K) + 2 a L^b n^{b c} + 2 L n^c ( \log(L) + c \log(n) ).
\end{eqnarray*}}
For all $\beta, n$,
{\small \begin{eqnarray*}
h(\beta, n) & = & \sum_{i=0}^{\lfloor \log(n) / |\log(\beta)| \rfloor} 2 \log(K)
+ 2 a L^b \beta^{b c i} n^{b c}
+  2 L \beta^{c i} n^c 
\log\Bigl( \sum_{j=0}^{i-1} L \beta^{c j} n^c \Bigr)
\\
& \leq & \frac{ \log(n)}{|\log(\beta)|} 2 \log(K)
+ \frac{2 a L^b n^{b c}}{1 - \beta^{b c}}
+ \frac{2 L n^c}{1 - \beta^c}
\log\Bigl( 
\frac{2 L n^c}{1 - \beta^c}
\Bigr)\\
& = & \frac{2 \log(K)}{|\log(\beta)|} \log(n)
+ \frac{2 a L^b }{1 - \beta^{b c}} n^{b c} +
\frac{2 L }{1 - \beta^c} n^c
\Bigl(c \log(n) +  \log \Bigl(   \frac{2 L}{1 - \beta^c} \Bigr)\Bigr).
\end{eqnarray*}}
If we let $\beta' = (\alpha + \beta)/2$, then, if $b \leq 1$, for sufficiently large $n$ we have
{\small \[
h(\beta', n) \leq \frac{2 L c}{1 - \beta^c} n^c \log(n).
\]}
For any  $\gamma > 2 L c / (1 - \alpha^c)$, we can find $\beta > \alpha$ such that $\gamma = 2 L c / (1 - \beta^c)$. Applying Theorem~\ref{thm:separator} with $\beta' = (\alpha + \beta)/2$, and then using the above bound gives us the result for $b \leq 1$. If $b > 1$ then, for sufficiently large $n$ we have
{\small \[
h(\beta', n) \leq \frac{2 a L^b }{1 - \beta^{b c}} n^{b c}.
\]}
For any  $\gamma > 2 a L^b/ (1 - \alpha^{b c})$, we can find $\beta > \alpha$ such that $\gamma = 2 a L^b / (1 - \beta^{b c})$. The result follows by Theorem~\ref{thm:separator} with $\beta' = (\alpha + \beta)/2$ and the bound above.
\end{proof}
We can now prove~\thm{minorFree}.
%
\begin{proof}[Proof of~\thm{minorFree}]
We apply Proposition~\ref{prop:minor} using the main  result in~\cite{MINORSEPARATOR} where it was shown that 
the class of such graphs satisfies a $k \sqrt{k n}$-separator theorem with $\alpha = 2/3$ and the main result in~\cite{COMPLETEMINOREDGES} which showed that any $K_k$ minor free graph of $n$ vertices has at most $q k \sqrt(\log(k)) n$ edges for constant $q < 0.32$. Now, any $H$ minor free graph must be a $K_k$ minor free graph, and hence must have at most $2^{q k \sqrt(\log(k)) n}$ connected coalition structures. So, using the terminology of Proposition~\ref{prop:minor}, we have $K = 1$, $a = \log(2) q k \sqrt(\log(k))$, $b =1$,  $L = k \sqrt{k}$ and $c = 1/2$. Thus, we can solve a general instance of the problem in $O(n^{\gamma n^{1/2}} ))$ for all $\gamma > \frac{2 L c}{1 - \alpha^{c}} = \frac{k \sqrt{k}}{1 - \sqrt{2/3}}$, as required.
\end{proof}
For planar graphs,~\thm{Planar} provides a stronger result. The proof 
follows similar lines as in the proof of~\thm{minorFree} and uses the following additional notation and lemma.
\begin{definition}\label{def:non-crossing}
Let $\mathcal{C}$ be a coalition structure over a set of nodes $C$ which have some ordering $C = \{c_1, c_2, \ldots c_r\}$. We say that $\mathcal{C}$ is \emph{non-crossing} if, for all $1 \leq i < j < k < l \leq r$, if $i, k \in C$ and $j, l \in D$ for $C, D \in \mathcal{C}$, then $C = D$.
\end{definition}
\begin{lemma}\label{lem:inducedcoalitionstructures}
Let $G$ be a planar graph which is embedded in a plane, and let $C$ be a subset of the boundary of the exterior region of $G$. Then, the set of (not necessarily connected) coalition structures over $C$ which can be induced
from connected coalition structures over $G$ minus the edges of $C$ is of size at most $4^{|C|}/2$.
\end{lemma}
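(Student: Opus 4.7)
My plan is to reduce the problem to counting non-crossing partitions of a cyclically ordered set of $|C|$ points. First, since $C$ lies on the boundary of the exterior face of the planar embedding of $G$, I can traverse that boundary to obtain a cyclic ordering $c_1, \ldots, c_r$ of $C$ (with $r = |C|$), which I fix throughout. Removing the edges $E(C)$ between vertices of $C$ can only enlarge the exterior face, so the same cyclic ordering is inherited by $G' = G \setminus E(C)$.

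The crux of the argument is to show that every coalition structure on $C$ arising as the restriction of a connected (in $G'$) coalition structure $\mathcal{D}$ on $N$ is non-crossing in the sense of~\defn{non-crossing} with respect to this ordering. Suppose for contradiction that there are coalitions $D_1, D_2 \in \mathcal{D}$ whose restrictions to $C$ interleave: there exist indices $i < j < k < l$ with $c_i, c_k \in D_1$ and $c_j, c_l \in D_2$. Connectivity of $D_1$ in $G'$ gives a path $P_1$ from $c_i$ to $c_k$ using only vertices of $D_1$, and similarly a path $P_2$ from $c_j$ to $c_l$ within $D_2$. Closing off $P_1$ by an arc from $c_i$ to $c_k$ drawn through the exterior face produces a Jordan curve in the plane. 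By the Jordan curve theorem this curve separates the plane into two regions, and the interleaving of the cyclic order forces $c_j$ and $c_l$ into opposite regions. Hence $P_2$ must cross the Jordan curve, but since $G$ is planarly embedded the only way it can is by sharing a vertex with $P_1$ or meeting the exterior arc at $c_i$ or $c_k$. The first is ruled out by $D_1 \cap D_2 = \emptyset$, and the second by $c_i, c_k \notin D_2$.

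Given the non-crossing claim, the lemma follows from the classical enumeration: the number of non-crossing partitions of an $r$-element set is the Catalan number $\frac{1}{r+1}\binom{2r}{r}$, and the standard bound $\binom{2r}{r} \leq 4^r$ yields at most $4^r/(r+1) \leq 4^{|C|}/2$ induced structures whenever $|C| \geq 1$.

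The main obstacle I anticipate is making the Jordan curve step fully rigorous, in particular handling the possibility that the boundary walk of the exterior face of $G'$ visits some vertex more than once, for instance at cut vertices. I expect to resolve this via a standard augmentation trick: attach an auxiliary wheel in the exterior face connecting a new apex vertex to each $c_i$, which enforces a simple cyclic ordering on $C$ and preserves planarity, and then pull the resulting contradiction back to $G$ itself.
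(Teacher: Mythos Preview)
Your proposal is correct and follows the same two-step outline as the paper: first show that any induced coalition structure on $C$ is non-crossing with respect to the boundary ordering, then bound the number of non-crossing partitions. Your Jordan-curve argument for the non-crossing step is a more detailed version of what the paper does in one sentence (``that would imply the existence of two disjoint paths in $G$ which cross each other in the plane'').

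The genuine difference is in the counting step. The paper does not invoke Catalan numbers at all; instead it gives a self-contained injective encoding: each node of $C$ is labelled $F$, $L$, $M$, or $S$ according to whether it is the first, last, a middle, or the sole member of its block in the linear order, and this label string can be parsed like a string of nested brackets to recover the partition. Since the first node can only be $F$ or $S$ and the last only $L$ or $S$, there are at most $4^{|C|}/2$ label strings, hence at most that many non-crossing partitions. Your route via $C_r = \frac{1}{r+1}\binom{2r}{r} \le 4^r/(r+1)$ is shorter if one already knows the Catalan enumeration, and in fact yields the sharper bound $4^{|C|}/(|C|+1)$; the paper's encoding argument is more elementary and avoids citing the enumeration result. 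Either way the lemma follows.
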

%
\begin{proof}
These coalition structures are a subset of the set of non-crossing coalition structures over $C$, using the clockwise ordering of nodes along the boundary path. No connected coalition structure over $G$ can induce a coalition structure over $C$ which isn't non-crossing, or else that would imply the existence of two disjoint paths in $G$ which cross each other in the plane.

Let us define the function $l(\cdot)$ which returns a labelling of the nodes in $C$ for each coalition structure. For $\mathcal{C}$, a coalition structure over $C$, under $l(\mathcal{C})$, we label each node with an $F$ if it is the first node along $C$ in a particular coalition, an $L$ if it is the last node along $C$ in a particular coalition, an $M$ if lies in the middle of a coalition, and an $S$ if it is the sole member of a singleton coalition. The labelling $l(\mathcal{C})$ uniquely defines $\mathcal{C}$ amongst all non-crossing coalition structures. For, given a labelling $l(\mathcal{C})$, we can recover $\mathcal{C}$. We do so by putting each node $u$ which is not labelled $L$ or $S$ in a coalition with the next node $v$ such that $v$ is labelled $L$ and the numbers 
of $L$ and $F$ labelled nodes between $u$ and $v$ are equal. This is similar to the parsing of a string of nested brackets. Since the first node can only be labelled $F$ or $S$ and the last node can only be labelled $L$ or $S$, there are at most $4^{|C|}/2$ such labellings. Hence there are at most $4^{|C|}/2$ such coalition structures over $C$.
\end{proof}
%
\begin{proof}[Proof of~\thm{Planar}]
We apply Proposition~\ref{prop:minor} using the main result in~\cite{PLANARGRAPHSEPARATOR}, which states that planar graphs satisfy a $2 \sqrt{2 n}$-separator theorem with $\alpha = 2/3$, along with Lemma~\ref{lem:inducedcoalitionstructures}. Now, Lemma~\ref{lem:inducedcoalitionstructures} only limits the number of coalition structures that can be induced by a connected coalition structure on the boundary of a graph. However these are precisely the coalitions structures that are considered in the inductive step of Theorem~\ref{thm:separator} when the function $g(n)$ is evaluated. This means that the proof of Theorem~\ref{thm:separator} and Proposition~\ref{prop:minor}, and the corresponding results all hold true for planar graphs, taking $g(n) = 4^n/2$. This corollary follows from Proposition~\ref{prop:minor}, taking $a = \log(4)/2$, $b = 1$, $L = 2 \sqrt{2}$ and $c = 1/2$. A general instance of the problem can then be solved in $O(n^{\gamma n^c})$ computation steps for any $\gamma > \frac{2 L c}{1 - \alpha^c} = \frac{2 \sqrt{2}} {1 - \sqrt{2/3}}$, as required.
\end{proof}
Recall that the class of planar graphs is equivalent to the class of $K_{3,3}$ and $K_5$ minor free graphs. For these graphs, \thm{NPhardPlanar}  shows that the graph coalition structure generation problem is NP--complete, even for simple, edge sum, coalition valuation functions (the proof of the theorem is presented in~\ref{subsec:planar}). However, as we show below, the GCSG over smaller minor free instances can be solved in polynomial time.


\subsection{Small Minor Free Graphs}\label{subsec:smallMinorFree}
\noindent
We now turn to consider $H$ minor free graphs where $H$ is small. The collection of results of this section is summarised in~\thm{small}. 
\begin{lemma}\label{lem:tree}
A graph coalition structure generation problem over a tree $G = (N,E)$
with $|N| = n$ can be solved in $O(n^2)$ computational steps.
\end{lemma}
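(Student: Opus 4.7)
The plan is to use the IDM property to show that, on a tree $T$, the valuation of every connected subset is an additive sum of per-node and per-edge contributions; once this is in hand, the optimization becomes essentially free. Specifically, set $a_u := v(\{u\})$ for each node $u$ and $b_{uw} := v(\{u,w\}) - a_u - a_w$ for each edge $(u,w) \in E$; the first goal is to prove $v(S) = \sum_{u \in S} a_u + \sum_{(u,w) \in E(S)} b_{uw}$ for every connected $S \subseteq N$, where $E(S)$ denotes the edges of $T$ with both endpoints in $S$.

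I would prove this decomposition by induction on $|S|$. Given $|S| \geq 2$, pick a leaf $u$ of the subtree induced by $S$ with unique $S$-neighbor $w$. The key observation is that because $T$ is a tree and $S$ is connected, the unique $u$-to-$j$ path in $T$ lies entirely inside $S$ for every $j \in S$ (otherwise $T$ would contain two distinct $u$-$j$ paths), and therefore begins $u \to w$. Consequently, for any enumeration $j_1, \ldots, j_m$ of $S \setminus \{u,w\}$, the set $C_k := \{w, j_1, \ldots, j_{k-1}\}$ is a vertex separator between $u$ and $j_k$ in $T$. Applying IDM once per $k$ telescopes to $v(S) - v(S \setminus \{u\}) = v(\{u,w\}) - v(\{w\}) = a_u + b_{uw}$, and the inductive hypothesis together with $E(S) = E(S \setminus \{u\}) \cup \{(u,w)\}$ closes the claim.

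With the decomposition in hand, every connected coalition structure on $T$ corresponds bijectively to a subset $E' \subseteq E$ of ``kept'' edges, the coalitions being the connected components of $(N, E')$; every such $E'$ is feasible because deleting any set of edges from a tree leaves a forest of connected subtrees. The total value is $\sum_u a_u + \sum_{e \in E'} b_e$, so the optimum is $E' = \{e : b_e > 0\}$, attained with value $\sum_u a_u + \sum_e \max(b_e, 0)$. The resulting algorithm performs $n$ oracle queries for the $a_u$, $n - 1$ further queries for the $b_e$, a linear-time connected-components computation, and a single sum, comfortably within the stated $O(n^2)$ bound (which is in fact loose). The only delicate point is verifying that $C_k$ is a separator in $T$ itself rather than merely in the induced subtree $T[S]$; this rests on the tree-uniqueness of paths, and everything else is routine bookkeeping.
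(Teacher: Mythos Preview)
Your proof is correct. Both your argument and the paper's hinge on the same observation---by IDM, the marginal of attaching a leaf $u$ (with unique neighbor $w$ in the relevant subtree) to any connected set through $w$ equals $v(\{u,w\}) - v(\{w\})$---but the packaging differs. The paper runs a recursive leaf-stripping algorithm: peel off a leaf $i$, solve optimally on $G\setminus i$, then reattach $i$ to its neighbor's coalition or leave it a singleton according to the sign of the local marginal; this costs $O(n)$ per peel and hence $O(n^2)$ overall. You instead iterate the leaf-marginal identity to establish the closed form $v(S) = \sum_{u\in S} a_u + \sum_{e\in E(S)} b_e$ for every connected $S$, after which the optimization collapses to selecting $E' = \{e : b_e > 0\}$. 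Your route yields a strictly stronger intermediate statement (the additive decomposition is of independent interest and in fact shows an $O(n)$ algorithm suffices), while the paper's version is terser and jumps directly to the recursion. Your separator check is also sound: since $\{w\}$ already separates $u$ from each $j_k$ in $T$ (tree-uniqueness of paths forces the $u$--$j_k$ path to lie inside $S$ and hence begin $u\to w$), every $C_k \supseteq \{w\}$ with $u, j_k \notin C_k$ is a fortiori a vertex separator.
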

%
\begin{proof}
The proof proceeds inductively. Suppose that for some $m$, $K$, all graph coalition structure generation problems over trees of $n < m$ nodes can be solved in $K n^2$ computational steps. We assume that $K$ is large enough that for any tree of $n$ nodes, it is possible to find a leaf node, $i$, with a single edge $(i, j)$ and evaluate $v(\{i,j\}) - v(\{i\})$ in $Kn$ steps. Then, given a tree of size $m$, we find a leaf node $i$ with edge $(i, j)$, and evaluate and store $v(\{i,j\}) - v(\{i\})$. Then we complete the graph coalition structure generation problem given by $v(\cdot)$ and $G \setminus i$. We can then extend this to solve the original problem over $G$ by adding $i$ to the coalition which contains $j$ if $v(\{i,j\}) - v(\{i\})$ is positive, or else putting $i$ in coalition $\{i\}$. The total computation time is $Kn^2$.
\end{proof}
The above result is related to results given in~\cite{TREE} regarding coalition structure generation over acyclic graphs. However,~\cite{TREE} does not make the IMD assumption. Their resulting algorithm is more complex than ours and has potentially exponential running time. This is to be expected, as without the independence of disconnected members, the coalition structure generation problem over star networks is necessarily exponential.

The class of trees is equivalent to the class of connected $K_3$ minor free graphs, and so it makes
sense to now consider the classes of $K_4$ and $K_{2,3}$ minor free graphs. We begin with a technical lemma.
\begin{lemma}\label{lem:k4cycle}
Every $2$-connected $K_4$ minor free graph contains a cycle that has at most $2$
nodes with degree greater than $2$, and furthermore, there are no edges between nodes
in the cycle beside those edges which make up the cycle.
\end{lemma}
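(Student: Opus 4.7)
The plan is to induct on $n = |V(G)|$. The base case $n = 3$ is immediate: the only $2$-connected simple graph on three vertices is $K_3$, itself a triangle with all vertices of degree $2$ and no possible chord.

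For the inductive step ($n \geq 4$), the first step is to establish that $G$ contains a vertex of degree exactly $2$. Since $G$ is $K_4$-minor-free, it has tree-width at most $2$; in a tree decomposition of width $2$, any leaf bag contains a vertex whose neighbors in $G$ all lie inside that bag, so that vertex has degree at most $2$. Combined with $2$-connectivity (minimum degree $\geq 2$), this gives a vertex $v \in V(G)$ of degree exactly $2$; let its two neighbors be $u$ and $w$, which are distinct because $G$ is simple.

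The argument then splits into two cases. If $u$ and $w$ are adjacent, the triangle $uvw$ is the desired cycle: the vertex $v$ has degree $2$, so at most $u$ and $w$ have degree greater than $2$, and a simple graph admits no edge between three given vertices other than the triangle's edges, so there are no chords. Otherwise, let $G'$ be obtained from $G$ by suppressing $v$, i.e.\ deleting $v$ and adding the edge $uw$. Since $G$ is a subdivision of $G'$, the graph $G'$ is simple, $2$-connected and $K_4$-minor-free, and has $n - 1 \geq 3$ vertices, so by the inductive hypothesis it contains a good cycle $C'$. If $uw \notin E(C')$, then $C'$ is already a good cycle in $G$. Otherwise, replacing the edge $uw$ of $C'$ by the path $u$-$v$-$w$ produces a cycle $C$ in $G$; because $v$ has degree $2$ in $G$ the number of cycle vertices with degree greater than $2$ does not change, and $C$ is chordless because any candidate chord in $G$ would have to be a chord of $C'$ in $G'$ (ruled out by induction), be the edge $uw$ (not in $G$ by assumption), or be incident to $v$ (whose only neighbors $u, w$ are both consecutive to $v$ on $C$).

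The main obstacle is the existence of a degree-$2$ vertex; the inductive step via suppression is otherwise routine bookkeeping. An alternative to the tree-width invocation would be to cite directly that any simple graph of minimum degree $\geq 3$ contains a $K_4$ subdivision, which forces a $K_4$ minor and thus contradicts $K_4$-minor-freeness.
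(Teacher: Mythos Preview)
Your proof is correct, and it takes a genuinely different route from the paper's. The paper proceeds by an extremal argument: it picks a cycle $C$ in $G$ that maximises a parameter $d(C)$ measuring the size of the largest ``attachment'' of $G\setminus C$ to $C$ (and, among maximisers, has fewest vertices), then argues directly that such a $C$ has no chords and analyses the attachment sets $B_i$ of the components of $G\setminus C$. Two-connectivity forces $|B_i|\geq 2$ and $K_4$-minor-freeness forces $|B_i|\leq 2$; a non-crossing/partial-order argument on the $B_i$ then shows they all coincide, giving at most two high-degree vertices on $C$.

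Your induction via suppression of a degree-$2$ vertex is shorter and conceptually cleaner, and the bookkeeping you give (degrees unchanged for $u,w$; chordlessness preserved because the only new adjacency pair is $\{u,w\}$, which is not an edge of $G$) is accurate. The trade-off is that your argument imports a nontrivial fact from outside---either the tree-width characterisation of $K_4$-minor-free graphs or Dirac's theorem that minimum degree $\geq 3$ forces a $K_4$-subdivision---whereas the paper's proof is entirely self-contained. If you want a version that matches the paper's self-containment, the Dirac statement has a short direct proof (grow a longest path and look at the back-edges from its endpoint), which you could fold in.
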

%
\begin{proof}
Suppose $G$ is a $2$-connected $K_4$ minor free graph. Since it is $2$-connected, it cannot be acyclic. Let $C = \{c_1, c_2, \ldots, c_m\}$ be a cycle in $G$ and let $A_1, \ldots A_r$ be the connected components of $G \setminus C$, and for each $i = 1, \ldots r$ let $B_i$ be equal to all $c_j$ such that $(c_j, k) \in E$ for some $k \in A_i$.
For each $i = 1, \ldots r$ let $D_i$ be the union of $B_i$ and all $A_j$ such that $B_j = B_i$. Let $d(C)$ be the maximum number of nodes in $D_i$ for $i = 1, \ldots r$. We assume without loss of generality that $C$ maximises $d(C)$ over all cycles in $G$. If more than one cycle maximises $d(\cdot)$ then we pick the cycle with fewest nodes.
Now there can be no edges between the nodes in $C$ apart from those that form the cycle itself, otherwise we could
find a cycle whose nodes were a strictly smaller subset of $C$. This cycle would also maximise $d(\cdot)$, which contradicts our choice of $C$.

Now, if $|B_i| = 1$ for any $i$, then letting $B_i = \{c_j\}$, we would have that the removal of $c_j$ from $G$ leaves two disconnected components. This contradicts the $2$-connectedness of $G$. If, for some $i$, $|B_i| \geq 3$, then, letting $B_*$ be three elements of $B_i$, we could create a $K_4$ minor by contracting $A_j$ into the cycle, for $j \neq i$, contracting $A_i$ into one point, and then contracting the nodes in the cycle $c_1, c_2, \ldots c_m$, to the closest nodes in $B_*$. Thus, we must have that $|B_i| = 2$ for all $i$. Let us assume that we have numbered the $D_i$ so that $B_1 = B_2 = \ldots = B_k$ for some $k$ and $B_i \neq B_k$ for $i > k$. Now, if $k = r$ or if $r = 0$ then we are done. Let us suppose otherwise.

For each $i \neq j$, with $B_i \cap B_j = \emptyset$, the two elements of $B_i$ and the two elements of $B_j$ do not occur alternately through the cycle $c_1, c_2, \ldots c_m$. For otherwise, repeatedly contracting along edges of $G$ 
to the nodes of $B_i \cup B_j$ would result in a $K_4$ graph. We define a partial ordering $\geq_C$ over the $B_i$ for $i > k$ where $B_i \geq_C B_j$ if the path along $C$ between the two points of $B_i$ which intersect with fewest points of $B_1$ does intersect with both points of $B_j$. By picking a minimal element according to $\geq_C$, we can find $B_i$ with $i>k$ such that there is a path along $C$ between the two elements of $B_i$ which does not intersect with any other $B_j$ except at $B_j$.

This means we can find a cycle $d_1, \ldots d_s$ such that $\{d_1, \ldots d_s\} \subset \{c_1, \ldots c_m\} \cup B_i \cup A_i$ for some $i > k$, and $\{d_1, \ldots d_s\} \cap (B_j \setminus B_i) = \emptyset$ for all $j \neq i$. This would mean that $d(\{d_1, \ldots d_s\})$ would be strictly bigger than $d(C)$, as all of $D_1$ and at least one point of $B_1$ must lie inside connected components of $G \setminus \{d_1, \ldots d_s\}$ which connect to $\{d_1, \ldots d_s\}$ through $B_i$. Since this is a contradiction, our supposition must be false, and so $k=r$ or $r=0$, and the proof is complete.
\end{proof}
We now prove the complexity of the GCSG over $K_4$ minor free graphs in two steps.
\begin{lemma}\label{lem:k4solve2con}
A GCSG problem over a $2$-connected $K_4$ minor free graph $G = (N,E)$ with $|N| = n$ can be solved in $O(n^3)$ computational steps.
\end{lemma}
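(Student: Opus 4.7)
The plan is to proceed by strong induction on $n$. Applying Lemma~\ref{lem:k4cycle}, I would extract a cycle $C = c_1, \ldots, c_m$ in $G$ whose interior nodes are degree-$2$ in $G$ and so touch the rest of $G$ only through the at-most-two attachment nodes $D \subseteq C$ (those of degree greater than two). I would then use Lemma~\ref{lem:valuation} to split the GCSG into a cycle subproblem and a strictly smaller recursive subproblem.

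Take $A$ to be the cycle subgraph (all cycle nodes and all cycle edges) and $B$ to be $G$ with the cycle interior removed and all cycle edges deleted; then $A$ and $B$ are edge-disjoint, $A \cup B = G$, and $A \cap B = D$, so Lemma~\ref{lem:valuation} applies. Since $|D| \le 2$, there are at most two possible coalition structures $\mathcal{D}$ on $D$. For each $\mathcal{D}$, I would compute the optimal connected coalition structure $\mathcal{A}^*_{\mathcal{D}}$ on the cycle $A$ that induces $\mathcal{D}$ in $O(m^3)$ time, by iterating over each cycle edge as a possible ``break'' and invoking the path algorithm of Lemma~\ref{lem:tree} on the resulting arc with boundary conditions specified by $\mathcal{D}$. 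Lemma~\ref{lem:valuation} then supplies an IDM valuation $v^{\mathcal{D}}$ on $B$ that depends only on $\mathcal{D}$, so I can recursively solve the GCSG on $(B, v^{\mathcal{D}})$ and output $\max_{\mathcal{D}} \bigl[v(\mathcal{A}^*_{\mathcal{D}}) + v^{\mathcal{D}}(\mathcal{B}^*_{\mathcal{D}})\bigr]$. The degenerate case $|D|=0$, where $G=C$ is a single cycle, is handled by the same cycle computation executed as a one-shot in $O(n^3)$, still within the claimed bound.

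The main obstacle I foresee is that $B$ need not remain $2$-connected after removing the cycle interior, since the attachment nodes can become articulation points, so the inductive hypothesis (stated for $2$-connected graphs) does not apply directly. I would address this by decomposing $B$ into its maximal $2$-connected blocks and applying the inductive hypothesis to each block, combining the block solutions via a tree dynamic program on the block-cut tree of $B$ analogous to Lemma~\ref{lem:tree}, with articulation points playing the role of shared boundary nodes handled through the IDM structure. For the overall running time, an amortized analysis over recursion levels using $\sum_i (m_i - |D_i|) \le n$ gives $\sum_i m_i^3 \le n^2 \sum_i m_i = O(n^3)$, yielding the desired bound.
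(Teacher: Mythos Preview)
Your overall decomposition via Lemma~\ref{lem:valuation} is sound, and your amortised bookkeeping for the cycle work is correct. The substantive difference from the paper lies in how the recursion is kept inside the class of $2$-connected $K_4$-minor-free graphs. You acknowledge that $B$ need not be $2$-connected after deleting the cycle interior, and you propose to repair this with a block--cut tree dynamic program at every recursion level. The paper sidesteps this entirely with a one-line trick: after removing the cycle interior, it \emph{adds} the edge $(c_1,c_i)$ between the two attachment vertices. The resulting graph $G'$ is a minor of $G$ (contract one of the two arcs of $C$ joining $c_1$ and $c_i$), hence still $K_4$-minor-free, and a short argument shows it remains $2$-connected. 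Thus the inductive hypothesis applies directly to a single strictly smaller instance; each reduction costs $O(n^2)$, and $O(n^3)$ follows immediately.

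Your block--cut route is plausible, but the cost analysis you give is incomplete. You bound $\sum_i m_i^3$, but not the $O(n_i^2)$ cost of locating the Lemma~\ref{lem:k4cycle} cycle and of performing the block decomposition at each recursion node; once the recursion tree branches, the bound $\sum_i n_i^2 = O(n^3)$ requires a separate argument about how total vertex count accumulates across blocks (each articulation point is duplicated). This can be made to work, but it is exactly the analysis the paper defers to Proposition~\ref{prop:k4solve}, and the edge-insertion trick makes it unnecessary here. A minor further point: the paper handles the cycle subproblem in $O(m)$ rather than $O(m^3)$ by precomputing the per-edge increments $v_j$ and using IDM to express the value of any arc as a sum of the relevant $v_j$; your $O(m^3)$ is within budget but coarser than needed.
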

%
\begin{proof}
From Lemma~\ref{lem:k4cycle} we know that $G$ contains a cycle with at most $2$ nodes with degree greater than $2$. We can find such a cycle by finding all nodes with degree equal to $2$, deriving the paths these nodes and their neighbours induce, and searching to find a pair of these paths which have the same endpoints. This will take $O(n^2)$ steps at most.

Now, suppose we have found such a cycle $C = \{c_1, \ldots c_r\}$. For each edge in $C$, $(c_j, c_{j+1})$, we compute a value $v_j = v(\{c_j, c_{j+1}\})$, letting $v_r = v(\{c_r, c_1\})$. By independence of disconnected members, the value of any coalition that is a subset of $C$ is equal to the sum of $v_j$ for the all $j$ such that  the corresponding edge is contained within the coalition.

If all nodes have degree $2$ then $G = C$ and we can solve it easily. The optimal coalition structure is either $C$ or the structure that arises when each pair of consecutive nodes in $C$ lie in the same coalition if and only if the corresponding $v_j$ is positive. These two possibilities are easily checked in $O(n^2)$ time. Otherwise, there must be two nodes with degree greater than $2$ (by the $2$-connectedness of $G$). Let $c_1$ and $c_i$ be those nodes. Without loss of generality we can assume that $i > 2$.

If $i = r$ then we can reduce the problem to an induced connected coalition formation problem over $G \setminus \{c_2, \ldots c_{i-1}\}$ in the following manner. If $v_j$ are positive for all $j=1, \ldots i-1$, then we add $\sum_{j=1}^{i-1} v_j$ to the value of any coalition which contains both $c_1$ and $c_j$, and $\sum_{j=1}^{k-1} v_j$ to any coalition which contains $c_1$ but not $c_i$ and $\sum_{j=k+1}^{i-1}$ to any coalition which contains $c_i$ but not $c_1$, where $k$ is such that $v_k = \min_{j=1}^{i-1} v_j$. The optimal coalition structure over $G \setminus \{c_2, \ldots c_{i-1}\}$ can then be extended to an optimal coalition structure over $G$ by adding $P$ to any coalition that contains both $c_1$ and $c_i$, $\{c_2, \ldots c_k\}$ to any coalition that contains $c_1$ but not $c_i$ and $\{c_{k+1}, \ldots c_i\}$ to any coalition that contains $c_i$ but not $c_1$. If $v_j$ is negative for some $1 \leq j < i$ then we just add the sum of all positive $v_j$ for $j = 1, \ldots i-1$ to any coalition that contains $c_1$. The optimal coalition structure over $G \setminus \{c_2, \ldots c_{i-1}\}$ can then be extended to an optimal coalition structure over $G$ by ensuring that for all $j$, $c_j$ and $c_{j+1}$ are in the same coalition if and only if $v_j$ is positive. Note, $G \setminus \{c_2, \ldots c_{i-1}\}$ is $K_4$ minor free, since it is a minor of $G$. Furthermore, it is $2$-connected since, any cycle not equal to $C$ which passes through $C$ must pass through $c_1$ and $c_i$, and so there must exist a cycle which passes through the same points in $G \setminus C$ and uses edge $(c_1, c_i)$.

If $i \neq r$ then we can reduce the problem to an induced graph coalition structure generation problem over the graph $G'$ that is formed by adding the edge $(c_1, c_i)$ to $G \setminus (C \setminus \{c_1, c_i\})$. Let $P_1$ be the path $\{c_1, c_2, \ldots c_i\}$ and let $P_2$ be the path $C \setminus (P_1 \setminus \{c_1, c_i\})$. There is only one coalition structure over $P_1$ in which $c_1$ and $c_i$ lie in the same coalition. If $v_j$ is positive for $j=1, \ldots i-1$ then the optimal coalition structure over $P_1$ in which $c_1$ and $c_i$ do not lie in the same coalition is $\{c_1, \ldots c_k\}, \{c_{k+1}, \ldots c_i\}$ where $k$ is such that $v_k = \min_{j=1}^{i-1} v_j$. If $v_j$ is negative for some $1 \leq j < i$, then the optimal coalition structure over $P_1$ in which $c_1$ and $c_i$ do not lie in the same coalition is the coalition structure where for all $j = 1, \ldots i-1$ $c_j$ and $c_{j+1}$ lie in the same coalition if and only if $v_j$ is positive. Similarly we can find the optimal coalition structures over $P_2$ which have $c_1$ and $c_i$ in the same coalition and in different coalitions. By testing out the four combinations of these coalition structures over $P_1$ and $P_2$ we can find the optimal coalition structures over $C$ which have $c_1$ and $c_i$ in the same coalition and in different coalitions. Let $v_+$ be the value of the optimal coalition structure over $C$ in which $c_1$ and $c_i$ lie in the same coalition, and let $v_-$ be the value of the optimal coalition structure over $C$ in which $c_1$ and $c_i$ lie in different coalitions. Then,  we create a value function for coalitions over $G'$ by modifying $v(\cdot)$ so that for any coalition which contains both $c_1$ and $c_i$, we add $v_+$, and for any coalition which contains $c_1$ but not $c_i$, we add $v_-$. We can then extend an optimal coalition structure over $G'$ to an optimal coalition structure over $G$ by combining it with the appropriate optimal coalition structure over $C$. Note, $G'$ is a minor of $G$ and so must be $K_4$ minor free. It is also $2$-connected, as any cycle in $G$ not equal to $C$ which passes through $C$ must pass through $c_1$ and $c_i$, and so there must exist a cycle in $G'$ which passes through the same points in $G \setminus C$ and uses the edge $(c_1, c_i)$.

Thus, in $O(n^2)$ time we can reduce our problem over $G$ to a problem over a graph with strictly fewer nodes. Hence, by induction, the total time to solve the problem over $G$ is $O(n^3)$.
\end{proof}
The next proposition then follows from Lemmas~\ref{lem:tree} and~\ref{lem:k4solve2con}.
\begin{proposition}\label{prop:k4solve}
A graph coalition structure generation problem over a $K_4$ minor free graph $G = (N,E)$
with $|N| = n$ can be solved in $O(n^3)$ computational steps.
\end{proposition}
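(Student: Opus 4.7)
The plan is to reduce to the 2-connected case via the standard block decomposition and then combine per-block solutions using the IDM property.

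First, I would compute the block-cut tree of $G$ in $O(n)$ time (recall that $K_4$ minor free graphs have $O(n)$ edges). The blocks $B_1,\ldots,B_k$ partition the edges of $G$ and pairwise intersect only at cut vertices. Each $B_i$ is $K_4$ minor free (closure under subgraphs), so it is either a single bridge, solvable in $O(1)$ (or via Lemma~\ref{lem:tree}), or a 2-connected $K_4$ minor free graph solvable in $O(n_i^3)$ via Lemma~\ref{lem:k4solve2con}, where $n_i=|V(B_i)|$. I note that for any $u,v \in V(B_i)$, every simple $G$-path between $u$ and $v$ lies inside $B_i$ (since exiting would require revisiting a cut vertex), so the vertex separators of $u,v$ in $G$ and in $B_i$ coincide; hence the restriction of $v(\cdot)$ to subsets of $V(B_i)$ remains IDM on $B_i$ and the lemmas apply.

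Next I would establish the additive decomposition across blocks. The IDM property gives the basic identity $v(X \cup Y) = v(X) + v(Y) - v(\{c\})$ whenever $X, Y$ share only a cut vertex $c$ that separates them, which extends by induction along the block-cut tree to
\[
v(K) = \sum_i v(K \cap B_i) \;-\; \sum_{c \in K,\; c \text{ a cut vertex}} (\deg(c) - 1)\, v(\{c\})
\]
for every connected coalition $K$, where $\deg(c)$ counts the blocks containing $c$ (and $c \in K$ forces $c \in K \cap B_i$ for every such block). Summing over the coalitions of a connected coalition structure $\mathcal{C}$, each cut vertex $c_j$ belongs to exactly one coalition, so the correction telescopes into $\sum_j (\deg(c_j)-1)\, v(\{c_j\})$, a constant depending only on $G$ and $v$. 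Conversely, any choice of connected coalition structures on the $B_i$ glues consistently into a connected CS on $G$ by merging block-level coalitions that share a cut vertex. Therefore
\[
\mathrm{OPT}(G) = \sum_i \mathrm{OPT}(B_i) \;-\; \sum_j (\deg(c_j)-1)\, v(\{c_j\}),
\]
and the global problem decouples into independent subproblems on the blocks.

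The total running time is $O(n)$ for the block decomposition plus $\sum_i O(n_i^3) \leq O\bigl(n^2 \sum_i n_i\bigr) = O(n^3)$, using $\sum_i n_i \leq 2n$ in a block-cut tree. The main obstacle I expect is the combinatorial bookkeeping for the decomposition identity: one must verify by induction on the block-cut tree that the pointwise IDM identity at each cut vertex aggregates into the correct telescoping correction, and that every combination of per-block connected coalition structures indeed composes into a valid global connected CS with no hidden consistency constraint at the cut vertices.
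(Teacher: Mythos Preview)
Your proposal is correct and follows essentially the same strategy as the paper: decompose $G$ into its 2-connected blocks, invoke Lemma~\ref{lem:k4solve2con} on each block, and use the IDM property together with the bound $\sum_i n_i \le 2n$ to combine the block solutions within the $O(n^3)$ budget. The paper finds the blocks by testing each vertex for being a cut vertex (costing $O(n^3)$) and justifies the recombination by a one-line appeal to IDM, whereas you use the linear-time block-cut tree and spell out the telescoping correction $\sum_j (\deg(c_j)-1)\,v(\{c_j\})$ explicitly; these are presentational differences rather than a different argument.
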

%
\begin{proof}
Lemma~\ref{lem:tree} allows us to solve for any acyclic graph in $O(n^2)$ steps. Since it takes $O(n^2)$ steps to test for acyclicity, we can begin our solving algorithm by checking whether $G$ is acyclic and if it is, solving it using the process described in the proof for Lemma~\ref{lem:tree}.

If the graph is acyclic, then the next step will be to split the problem into smaller, independent problems. For each $i \in N$ we test whether or not $G \setminus \{i\}$ is disconnected. If $G \setminus \{i\}$ is disconnected then we split $G$ into the induced subgraphs formed by combining each connected component of $G \setminus \{i\}$ with $\{i\}$. We then repeat the process over each of these subgraphs. However, once a node has been tested as to whether or not its removal disconnects $G$, then it is no longer necessary to test it again. For a node disconnects a graph if and only if it is not on any cycles in that graph. This property is conserved by taking subgraphs as described since, by definition, this process preserves all cycles. Thus, in $O(n^3)$ steps we can find subgraphs $A_1, A_2, \ldots A_r$ such that, $\cup_{i=1}^l A_i = G$ and each $A_i$ is $2$-connected.

We can solve the graph coalition structure generation problem by solving the induced problem over each $A_i$ for $i=1, \ldots r$ and then combining the resulting optimal coalition structures together. This will be optimal for the problem over $G$ by independence of disconnected members.

Now, if were were to create a graph over $l$ nodes where we include the edge $(i, j)$ if and only if $A_i \cap A_j \neq \emptyset$, then, by the definition of the $A_i$, this graph would be acyclic, and would thus have $r-1$ edges. Moreover, if $|A_i \cap A_j| \leq 1$ for all $i$ and $j$, as otherwise we would have that $A_i \cup A_j$ were $2$-connected. So $\sum_{i=1}^r |A_i| \leq n + r -1 \leq 2 n$. Since the cubic function is sub-additive, by Lemma~\ref{lem:k4solve2con}, the solution process as described takes at most $O(n^3)$ time.
\end{proof}
Finally, we show the result regarding $K_{2,3}$ minor free graphs.
\begin{proposition} \label{prop:k23solve}
A graph coalition structure generation problem over a $K_{2,3}$ minor free graph $G = (N,E)$
with $|N| = n$ can be solved in $O(n^3)$ computational steps.
\end{proposition}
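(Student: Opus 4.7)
The plan is to follow the proof of \prop{k4solve} verbatim at the outer level, reducing the problem to 2-connected blocks, and then to reduce each 2-connected $K_{2,3}$ minor free block to a case already handled by \lem{k4solve2con}.

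First, I would use the articulation-vertex decomposition from the proof of \prop{k4solve} to split $G$ into its 2-connected blocks $A_1, \ldots, A_r$ in $O(n^3)$ time. Since $K_{2,3}$ minor freeness is inherited by subgraphs, each $A_i$ is 2-connected and $K_{2,3}$ minor free; and since $\sum_i |A_i| \le 2n-1$, we have $\sum_i |A_i|^3 \le \bigl(\sum_i |A_i|\bigr)^3 \le (2n)^3 = O(n^3)$. It therefore suffices to show that each block can be solved in cubic time in its own size.

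The key structural step is a dichotomy: every 2-connected $K_{2,3}$ minor free graph $B$ is either (a) a subgraph of $K_4$ (so $|V(B)| \le 4$), or (b) outerplanar, and hence in particular 2-connected $K_4$ minor free. Given this dichotomy, blocks of type (a) are solved in $O(1)$ by direct enumeration of the constantly many connected coalition structures, and blocks of type (b) fall under the hypothesis of \lem{k4solve2con}, which already supplies an $O(|A_i|^3)$ algorithm. Summing over blocks then gives the claimed $O(n^3)$ bound.

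The main obstacle is to establish the dichotomy. By the classical Chartrand--Harary characterization of outerplanar graphs as precisely the $\{K_4, K_{2,3}\}$ minor free graphs, this reduces to showing that any 2-connected graph $B$ with $|V(B)| \ge 5$ containing $K_4$ as a minor must also contain $K_{2,3}$ as a minor. My proposal is to take any $K_4$-model $(V_1, V_2, V_3, V_4)$ in $B$ and iteratively absorb each connected component of $V(B) \setminus (V_1 \cup \cdots \cup V_4)$ into an adjacent branch set (preserving disjointness and connectivity), so that WLOG $V_1 \cup \cdots \cup V_4 = V(B)$. Since $|V(B)| \ge 5$, some $V_i$ contains at least two vertices; 2-connectedness of $B$ then forbids a single-vertex cut inside $V_i$, forcing a second cross-edge from $V_i$ to one of the other branch sets. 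Combined with the cross-edges already supplied by the $K_4$-model, this produces three internally disjoint paths of length $\ge 2$ between two of the contracted branch-vertices, i.e., a $K_{2,3}$ subdivision. The delicate part is the case analysis according to which branch set receives the second cross-edge, and how the path through the extended $V_i$ is routed internally.
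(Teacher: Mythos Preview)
Your outer strategy is exactly the paper's: decompose into 2-connected blocks, show each block is either (essentially) $K_4$ or $K_4$ minor free, and then invoke \lem{k4solve2con} for the latter, summing the cubic costs over blocks with $\sum_i |A_i|\le 2n$. The difference lies only in how the dichotomy for 2-connected blocks is obtained. The paper does \emph{not} use 2-connectedness or a branch-set argument for this step; instead it introduces the auxiliary graph $H$ (a $K_4$ with one edge subdivided), observes that $K_{2,3}$ is a subgraph of $H$, and proves by a short minimality argument on 5-vertex graphs that any graph with a $K_4$ minor but no $K_4$ subgraph already contains $H$ (hence $K_{2,3}$) as a minor. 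A separate one-line argument then shows that in an $H$-minor-free graph any $K_4$ subgraph must be an entire block. This yields the same dichotomy you state, but via an explicit finite case check rather than your branch-set/second-cross-edge reasoning.

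Your route is sound in principle and, once completed, amounts to the same fact; but as you note, the ``delicate'' routing of internally disjoint paths through the enlarged $V_i$ is not yet done. A cleaner way to finish your direction---closer to the paper---is to replace the abstract $K_4$-model by a \emph{topological} $K_4$ (legitimate since $\Delta(K_4)=3$): if any of the six branch paths has length $\ge 2$ you get $H$ immediately; otherwise $K_4$ is a subgraph on $\{a,b,c,d\}$, 2-connectedness gives two internally disjoint paths from a fifth vertex $e$ to two of $a,b,c,d$, and contracting that ear to a single internal vertex again yields $H$ and hence $K_{2,3}$. This sidesteps the case analysis you flagged and lands on exactly the 5-vertex picture the paper uses.
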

%
\begin{proof}
Let $H$ be the graph that is formed by taking a $K_4$, and removing an edge $(n_1, n_2)$, adding a node $n_3$ and adding edges $(n_1, n_3)$ and $(n_2, n_3)$. So $H$ is essentially a $K_4$ where one edge has been replaced by a path of length $3$. Since $K_{2,3}$ is a minor of $H$, any $K_{2,3}$ minor free graph is an $H$ minor free graph. We will show that we can solve any $H$ minor free graph in $O(n^3)$ time.

We first claim that any $H$ minor free graph can only have a $K_4$ minor if that minor is a subgraph. For, suppose $G$ has a $K_4$ minor but no $K_4$ subgraph. This minor arises from at least one edge contraction on a subgraph of $G$. Let us assume that we pick the minimal subgraph of $G$ such that a sequence of edge contractions will yield a $K_4$, and then consider the minor $G'$ which occurs before the last edge contraction. The graph $G'$ has five nodes and has two nodes $n_1, n_2$ such that the edge $(n_1, n_2) \in G'$ and contracting along $(n_1, n_2)$ yeilds a $K_4$. This means that $\{n_3, n_4, n_5\} = G' \setminus \{n_1, n_2\}$ is a $K_3$, and the union of the neighbours of $n_1$ and the neighbours of $n_2$ must be all of $\{n_3, n_4, n_5\}$. If either $n_1$ or $n_2$ have no neighbours in $\{n_3, n_4, n_5\}$ then the edge contraction is not necessary, and instead the extra node could be removed by starting with a smaller subgraph before the edge contractions. This contradicts our choice of subgraph. So both $n_1$ and $n_2$ has at least one neighbour in $\{n_3, n_4, n_5\}$. Suppose $n_1$ and $n_2$ do not share a neighbour. Without loss of generality, say $(n_1, n_3), (n_2, n_4) \in G'$. Then $n_5$ must have a neighbour in $\{n_1, n_2\}$,
and again we say without loss of generality, $(n_1, n_5) \in G'$. Then $G'$ is isomorphic to $H$. Now suppose $n_1$ and $n_2$ do share a neighbour. Without loss of generality, say $(n_1, n_3), (n_2, n_3) \in G'$. Suppose $n_4$ and $n_5$ do not have the same neighbour in $\{n_1, n_2\}$, say without loss of generality $(n_1, n_4), (n_2, n_5) \in G'$. Then $G'$ is isomorphic to $H$. Now suppose $n_4$ and $n_5$ share a neighbour in $\{n_1, n_2\}$, say without loss of generality $(n_1, n_4), (n_1, n_5) \in G'$. Then, removing edge $(n_1, n_3)$, and edges $(n_2, n_4)$ and $(n_2, n_5)$ if either of them are in $G'$, leaves a graph isomorphic to $H$.

Thus, if a graph $G$ is $H$ minor free, it can only have a $K_4$ minor if that minor occurs as a subgraph of $G$. Moreover, if such $G$ has a subgraph $G'$ which is isomorphic to $K_4$, then the removal of the edges of $G'$ from $G$ should disconnect the nodes of $G'$ from each other. For otherwise, there would exist two nodes of $G'$ and a path between them that was disjoint from $G'$. The union of that path and $G'$ would have a $H$ minor.

So, if we follow the first step of the procedure described in the proof of Proposition~\ref{prop:k4solve}, and split $G$ up into $2$-connected subgraphs, then every resulting subgraph will either be isomorphic to $K_4$ or $2$-connected and $K_4$ minor free. The subgraphs isomorphic to $K_4$ can be solved in $O(1)$ steps. The other subgraphs can be solved using the technique described in Lemma~\ref{lem:k4solve2con}. As argued in the proof of Proposition~\ref{prop:k4solve}, the total number of nodes across all subgraphs is bounded by $2n$, and so the entire process takes $O(n^3)$ time. 
\end{proof}


\subsection{Planar Graphs}\label{subsec:planar}
\noindent
Here we prove NP-hardness result for larger minor-free graphs. 
\begin{proof}[Proof of~\thm{NPhardPlanar}]
Suppose we have a 3-SAT problem with clauses $C_1, \ldots C_m$. We will construct an edge sum graph coalition structure generation problem over a planar graph of $O(m^2)$ nodes which, when solved, reveals a solution to the 3-SAT problem if one exists. We first define some components.

The first component is given in Figure~\ref{fig:symbol2}. We will use the symbol in Subfigure~\ref{sfig:symbol2s} to represent three nodes that surround a subgraph with edge values given in Subfigure~\ref{sfig:symbol2ev}. If this is a subgraph of an edge sum problem graph, then the contribution these edge values make to the valuation of a coalition structure is at most $3$, with equality only if the induced structure over the three outer nodes is that given in Subfigure~\ref{sfig:symbol2state1} or that in Subfigure~\ref{sfig:symbol2state2}. If the induced coalition structure over these three nodes is not one of these two structures, then the contribution will be less than $3$. We similarly describe two more triangular components in Figures~\ref{fig:symbol3} and~\ref{fig:symbol4}, and a double line component in Figure~\ref{fig:symbol5}.
\begin{figure}[ht]
\subfigure[Component edge values]{
\label{sfig:symbol2ev}
\includegraphics{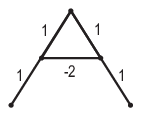}}
\subfigure[Symbol]{
\label{sfig:symbol2s}
\includegraphics{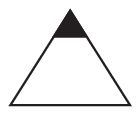}}
\subfigure[Optimal induced structure 1]{
\label{sfig:symbol2state1}
\includegraphics{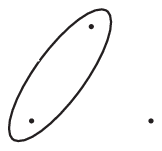}}
\subfigure[Optimal induced structure 2]{
\label{sfig:symbol2state2}
\includegraphics{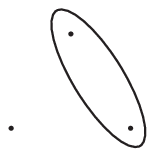}}
\caption{Component of edge sum problem.}\label{fig:symbol2}
\end{figure}
\begin{figure}[ht]
\subfigure[Component edge values]{
\label{sfig:symbol3ev}
\includegraphics{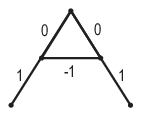}}
\subfigure[Symbol]{
\label{sfig:symbol3s}
\includegraphics{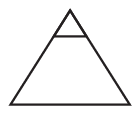}}
\subfigure[Optimal induced structure 1]{
\label{sfig:symbol3state1}
\includegraphics{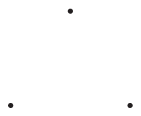}}
\subfigure[Optimal induced structure 2]{
\label{sfig:symbol3state2}
\includegraphics{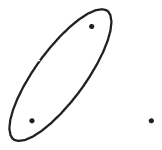}}
\subfigure[Optimal induced structure 2]{
\label{sfig:symbol3state3}
\includegraphics{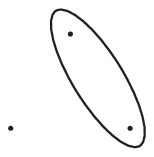}}
\caption{Component of edge sum problem.}\label{fig:symbol3}
\end{figure}
\begin{figure}[ht]
\subfigure[Component edge values]{
\label{sfig:symbol4ev}
\includegraphics{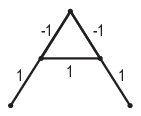}}
\subfigure[Symbol]{
\label{sfig:symbol4s}
\includegraphics{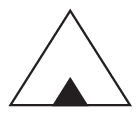}}
\subfigure[Optimal induced structure]{
\label{sfig:symbol4state1}
\includegraphics{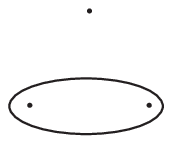}}
\caption{Component of edge sum problem.}\label{fig:symbol4}
\end{figure}
\begin{figure}[ht]
\subfigure[Component edge values]{
\label{sfig:symbol5ev}
\includegraphics{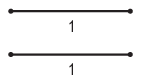}}
\subfigure[Symbol]{
\label{sfig:symbol5s}
\includegraphics{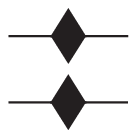}}
\subfigure[Optimal induced structure]{
\label{sfig:symbol5state1}
\includegraphics{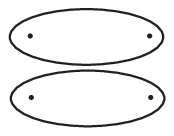}}
\caption{Component of edge sum problem.}\label{fig:symbol5}
\end{figure}
We also describe a last component in Figure~\ref{fig:symbol1}, which we construct out of six copies of the component described in Figure~\ref{fig:symbol2}. For the three points labelled $A, B, C$, there are two induced coalition structures given in Subfigures~\ref{sfig:symbol1state1} and~\ref{sfig:symbol1state2}, for which the contribution of the edge values in the component is maximal.
\begin{figure}[ht]
\subfigure[Component construction]{
\label{sfig:symbol1cons}
\includegraphics{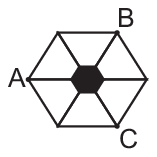}}
\subfigure[Symbol]{
\label{sfig:symbol1s}
\includegraphics{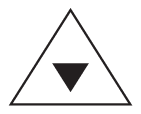}}
\subfigure[Optimal induced structure]{
\label{sfig:symbol1state1}
\includegraphics{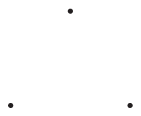}}
\subfigure[Optimal induced structure]{
\label{sfig:symbol1state2}
\includegraphics{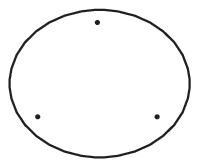}}
\caption{Component of edge sum problem.}\label{fig:symbol1}
\end{figure}
The planar graph edge sum problem we construct will be created from these components. For a graph consisting of these components, we will say that a coalition structure is locally optimal if the induced structure over every component is optimal for that component. Every locally optimal coalition structure is optimal, however it is not
guaranteed that such a structure exists.

We will now describe some constructs which are made from the above described components. The first is given in Figure~\ref{fig:construct1}. It is such that in any locally optimal coalition structure, nodes $X$ and $Y$ are always in the same coalition and the pair of nodes labelled $A$ lie in the same coalition if and only if the pair of nodes labelled $B$ lie in the same coalition.
\begin{figure}[ht]
\subfigure[Construct structure]{
\label{sfig:construct1cons}
\includegraphics[scale = 0.8]{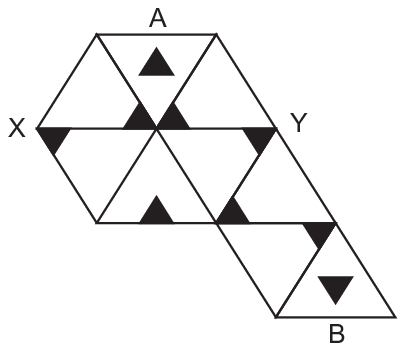}}
\subfigure[Locally optimal structure]{
\label{sfig:construct1state1}
\includegraphics[scale = 0.8]{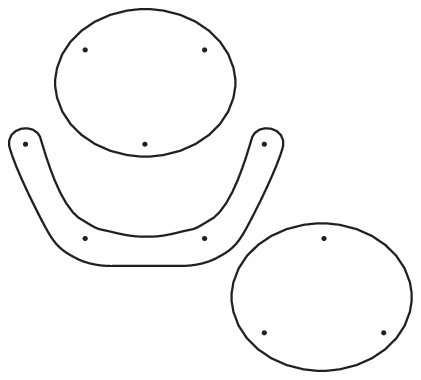}}
\subfigure[Locally optimal structure]{
\label{sfig:construct1state2}
\includegraphics[scale = 0.8]{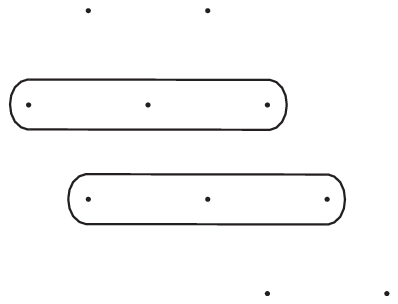}}
\caption{Edge sum construct.}\label{fig:construct1}
\end{figure}
The second and third constructs are given in Figures~\ref{fig:construct3} and~\ref{fig:construct4}. In the second construct, under a locally  optimal coalition structure, if the pair of nodes labelled $A$ are together in the same coalition, then the pair of nodes labelled $B$ are in the same coalition, and similarly for the pair of nodes labelled $C$. If the pair of nodes labelled $A$ are not in the same coalition, then the pair of nodes labelled $B$ are not in the same coalition, and similarly for the pair of nodes labelled $C$. The third construct is similar, except that under a locally optimal coalition structure, the state of whether or not the pair of nodes labelled $C$ are in the same coalition as each other is the opposite to the state of whether or not the pair of nodes labelled $A$ are in the same coalition as each other.
\begin{figure}[ht]
\subfigure[Construct structure]{
\label{sfig:construct3cons}
\includegraphics[scale = 0.8]{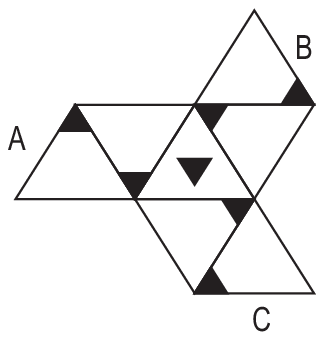}}
\subfigure[Locally optimal structure]{
\label{sfig:construct3state1}
\includegraphics[scale = 0.8]{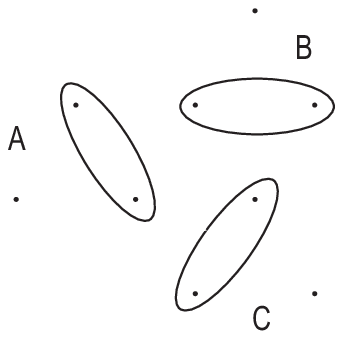}}
\subfigure[Locally optimal structure]{
\label{sfig:construct3state2}
\includegraphics[scale = 0.8]{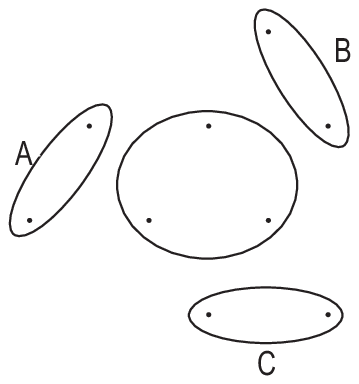}}
\caption{Edge sum construct.}\label{fig:construct3}
\end{figure}
\begin{figure}[ht]
\subfigure[Construct structure]{
\label{sfig:construct4cons}
\includegraphics[scale = 0.8]{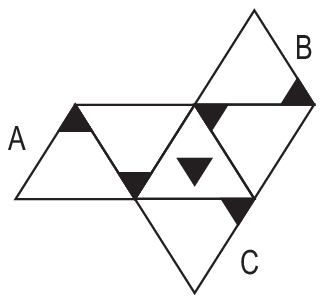}}
\subfigure[Locally optimal structure]{
\label{sfig:construct4state1}
\includegraphics[scale = 0.8]{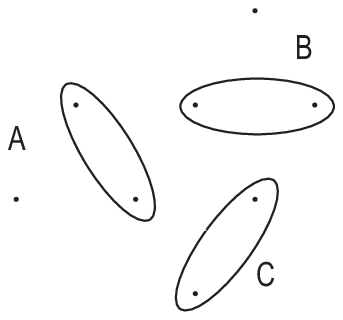}}
\subfigure[Locally optimal structure]{
\label{sfig:construct4state2}
\includegraphics[scale = 0.8]{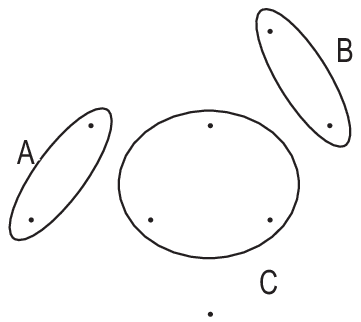}}
\caption{Edge sum construct.}\label{fig:construct4}
\end{figure}
The last construct is given in Figure~\ref{fig:construct2}. Subfigures~\ref{sfig:construct2Astate1}--\ref{sfig:construct2Cstate2} show locally optimal coalition structures over three different parts of the construct. Any coalition structure which induces any combination of these structures is locally optimal over the construct. However, not all combinations can be induced by a coalition structure over the construct. Under a given coalition structure, for any pair of nodes $X$, let $c(X)$ be the logical value of whether the two nodes in $X$ do not lie in the same coalition. Then, there are locally optimal coalition structures which give every possible combination of logical values for $c(A)$, $c(B)$ and $c(C)$ apart from $c(A)$, $c(B)$ and $c(C)$ all being false.
\begin{figure}[ht]
\subfigure[Construct structure]{
\label{sfig:construct2cons}
\includegraphics[scale = 0.8]{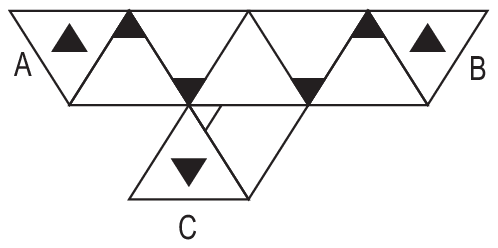}}
\subfigure[Partial locally optimal structure]{
\label{sfig:construct2Astate1}
\includegraphics[scale = 0.8]{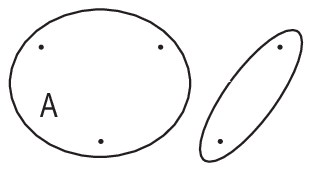}}
\subfigure[Partial locally optimal structure]{
\label{sfig:construct2Astate2}
\includegraphics[scale = 0.8]{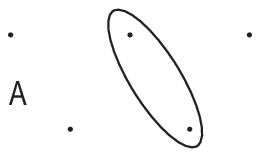}}
\subfigure[Partial locally optimal structure]{
\label{sfig:construct2Bstate1}
\includegraphics[scale = 0.8]{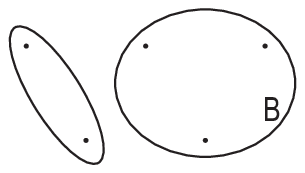}}
\subfigure[Partial locally optimal structure]{
\label{sfig:construct2Bstate2}
\includegraphics[scale = 0.8]{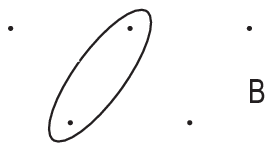}}
\subfigure[Partial locally optimal structure]{
\label{sfig:construct2Cstate1}
\includegraphics[scale = 0.8]{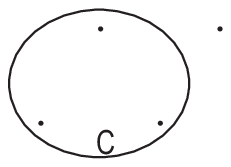}}
\subfigure[Partial locally optimal structure]{
\label{sfig:construct2Cstate2}
\includegraphics[scale = 0.8]{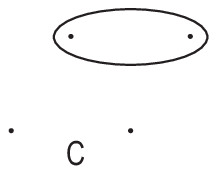}}
\subfigure[Partial locally optimal structure]{
\label{sfig:construct2Cstate3}
\includegraphics[scale = 0.8]{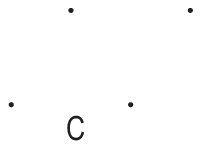}}
\caption{Edge sum construct.}\label{fig:construct2}
\end{figure}
We construct our edge sum problem to represent the 3-SAT problem as follows. We create a copy of the construct in Figure~\ref{fig:construct2} for each clause of the problem. The three pairs labelled $A, B, C$ are identified with the three literals in the corresponding clause. We identify a coalition structure over these constructs with a set of logical values for the literals in the clauses by saying that the literal associated with a pair of node is set as true if and only if those nodes are not in the same coalition. Using the component in Figure~\ref{fig:symbol5} we can connect the pairs of nodes that represent literals of the same variable or its negation to a series of copies of the constructs in Figures~\ref{fig:construct3} and~\ref{fig:construct4}. This allows us to we can ensure that any locally optimal coalition structure assigns consistent logical values to variables. To ensure that the resulting graph is planar, we can replace any pair of components from Figure~\ref{fig:symbol5} which cross over with two copies of the construct in Figure~\ref{fig:construct1}. A locally optimal coalition structure exists if and only if the original 3-SAT problem is satisfiable, and given any locally optimal coalition structure, we can identify a solution to the 3-SAT problem. Furthermore, if a locally optimal coalition structure exists, then a coalition structure is optimal if and only if it is locally optimal. The size of this graph is $O(m^2)$ and thus the proof is complete.
\begin{figure}[ht]
\includegraphics[scale = 0.5]{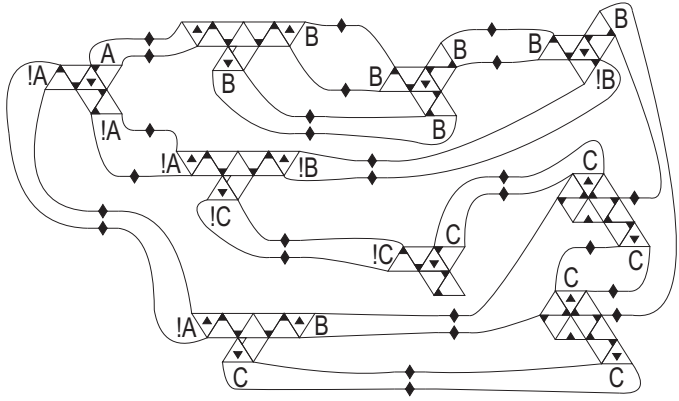}
\caption{Reduction of $(A \vee B \vee B) \wedge (!A \vee !B \vee !C) \wedge (!A \vee B \vee C)$.}\label{fig:examplereduction}
\end{figure}
An example of this reduction process is shown in Figure~\ref{fig:examplereduction}
for the 3-SAT problem
$(A \vee B \vee B) \wedge (!A \vee !B \vee !C) \wedge (!A \vee B \vee C)$.
\end{proof}


\section{Conclusions}\label{sec:conclusions}
\noindent
This paper initiates the study of coalition structure generation over graphs (GCSG) and provides the foundation for analysis of its computational complexity. Our results show that the problem can be solved in polynomial time for small minor free graphs, but is NP--complete for general, and even for planar graphs, with simple edge sum valuation functions. Future research on this topic will include the study of approximability of the GCSG problem for these and other interesting graph classes, and developing approximation schemes where applicable.

\bibliographystyle{plain}
\bibliography{cfrefs}















\end{document}